\documentclass[reqno,10pt]{amsproc}
\usepackage{amssymb}
\usepackage{amsmath}
\usepackage{amsthm}
\usepackage{amsfonts}
\usepackage{microtype}
\usepackage{xcolor}
\usepackage{geometry}
\usepackage{mathrsfs}
\usepackage{mathtools}
\usepackage{thmtools}

\usepackage[backend=biber,style=ieee,sorting=none,backref=true,citestyle=numeric-comp,giveninits=false]{biblatex}
\usepackage{tikz,tikz-cd}
\usetikzlibrary{arrows,cd,decorations.markings,arrows.meta}

\tikzset{->-/.style={decoration={markings,mark=at position .5 with {\arrow{>}}},postaction={decorate}}}
\tikzset{-<-/.style={decoration={markings,mark=at position .5 with {\arrow{<}}},postaction={decorate}}}
\tikzstyle{box}=[fill=white, draw=black, shape=rectangle, inner sep=14pt]
\tikzstyle{forward arrow}=[->-]
\tikzstyle{backward arrow}=[-<-]

\definecolor{myurlcolor}{rgb}{0,0,0.4}
\definecolor{mycitecolor}{rgb}{0,0.5,0}
\definecolor{myrefcolor}{rgb}{0.5,0,0}
\usepackage[breaklinks=true]{hyperref}
\hypersetup{colorlinks,
	linkcolor=myrefcolor,
	citecolor=mycitecolor,
	urlcolor=myurlcolor,
hypertexnames=false}
\usepackage[capitalize]{cleveref}

\theoremstyle{plain}
\declaretheorem[name=Dummy,numberwithin=section]{dummy}
\declaretheorem[name=Theorem,sibling=dummy]{thm}
\declaretheorem[name=Lemma,sibling=dummy]{lem}
\declaretheorem[name=Proposition,sibling=dummy]{prop}

\declaretheorem[name=Definition,sibling=dummy]{defn}

\theoremstyle{remark}
\declaretheorem[name=Example,sibling=dummy]{ex}
\declaretheorem[name=Remark,sibling=dummy]{rem}

\numberwithin{equation}{section}
\Crefname{equation}{}{}		

\allowdisplaybreaks

\let\originalleft\left
\let\originalright\right
\renewcommand{\left}{\mathopen{}\mathclose\bgroup\originalleft}
\renewcommand{\right}{\aftergroup\egroup\originalright}

\usepackage{enumitem}
\setlist[enumerate]{label=(\roman*),itemsep=5pt,topsep=8pt}
\setlist[itemize]{label=$\triangleright$,itemsep=5pt,topsep=6pt}
\Crefformat{enumi}{#2#1#3}

\newcommand{\newterm}[1]{\textbf{#1}}

\newcommand{\beq}{\begin{equation}}
\newcommand{\eeq}{\end{equation}}
\newcommand{\N}{\mathbb{N}}

\newcommand{\C}{\mathbb{C}}

\newcommand{\Calg}{\mathcal{A}}		
\newcommand{\mfrak}{\mathfrak{m}}		
\newcommand{\Op}{\operatorname{Op}}		
\newcommand{\ev}{\operatorname{ev}}		
\newcommand{\Cont}{\mathsf{Cont}}		
\newcommand{\eps}{\varepsilon}

\newcommand{\id}{\mathsf{id}}		
\newcommand{\st}{\mathsf{st}}		

\newcommand{\FinSet}{\mathsf{FinSet}}
\newcommand{\Set}{\mathsf{Set}}

\newcommand{\Meas}{\mathsf{Meas}}
\newcommand{\inst}{\mathcal{E}}		
\newcommand{\instb}{\mathcal{F}}	
\newcommand{\instmon}{\mathcal{I}}	

\usepackage{hyphenat}

\usepackage{todonotes}

\begin{document}



\title{The quantum instrument monad}

\author{Tobias Fritz}

\address{Department of Mathematics, University of Innsbruck, Austria}
\email{tobias.fritz@uibk.ac.at}
\keywords{}

\thanks{\textit{Acknowledgements.} We thank Sam Staton for extensive discussion and pointers to the literature as well as Tom{\'a}{\v{s}} Gonda and Antonio Lorenzin for further discussion.}

\begin{abstract}
	Monads are a ubiquitous structure in functional programming used for modelling computational effects.
	For example, the state monad models the effect of a computation interacting with a memory system.
	Here we introduce the quantum instrument monad $\instmon_\Calg$, which models the effect of a computation interacting with a quantum system with algebra of observables $\Calg$.
	It can be thought of as a noncommutative generalization of the state monad.
	We construct this quantum instrument monad in two versions: a finitary version on the category of sets and a measure-theoretic version on the category of measurable spaces
	(the latter under the assumption that $\Calg$ is a type I von Neumann algebra with separable predual).
	Both versions are strong monads.
	The construction of the measure-theoretic version is based on a new notion of integral of a quantum-operation-valued function against a state-valued measure.
\end{abstract}

\maketitle
\tableofcontents

\section{Introduction}

In operational terms, an interaction with a quantum system generally has two aspects: it can produce classical data, and it can change the state of the system.
The standard formalism that captures these two aspects simultaneously is that of a \newterm{quantum instrument}~\cite{DaviesLewis,Ozawa84,Heinosaari_Ziman}.
From this point of view, observables, state-update rules and quantum channels arise as special cases of instruments.
Thus if one wants to model computations that may interact with a quantum system, then doing so in terms of quantum instruments is a natural choice.

Second, since Moggi's seminal work~\cite{Moggi} it is well-understood that the category-theoretic notion of monad provides a general framework for how a computation can interact with its environment.
The classical state monad is the basic example: it models programs that may read and update a memory state while returning an output.
The main idea of the present paper is that interactions with a quantum system can be modelled by a monad as well, namely the \newterm{quantum instrument monad}, which is the quantum analogue of the classical state monad.
The quantum instrument monad models the effect of performing an arbitrary instrument on a fixed quantum system and returning the instrument outcome as output.

This perspective is particularly natural in quantum programming, where one repeatedly combines classical control with quantum data.
There is a substantial literature on quantum programming languages and their semantics; see for example the surveys~\cite{Selinger_survey,Gay_survey,Valiron_QPL_survey}.
Representative functional and semantic approaches include Selinger's language QPL~\cite{Selinger_QPL}, the language QML of Altenkirch and Grattage~\cite{Altenkirch_Grattage}, the quantum lambda calculus of Selinger and Valiron~\cite{Selinger_Valiron}, and the quantum IO monad~\cite{Quantum_IO_Monad}.
At a more implementation-oriented level, languages such as Quipper~\cite{Quipper} and QWIRE~\cite{QWIRE} likewise emphasize the disciplined combination of classical host-language control with quantum operations and measurement.
What seems to be missing, however, is a monadic construction whose effect is specifically ``performing an arbitrary instrument on a fixed quantum system''.
Such a construction is useful for the semantics of functional programming interfaces in which a computation can manipulate or measure an ambient quantum resource or interact with it in any other way permitted by quantum theory.

In this paper, we construct two versions of the quantum instrument monad in the Schr\"odinger picture.
The first one is a finitary version on the category of sets, and the second is a measure-theoretic version on the category of measurable spaces.
In both cases, the objects of the category are outcome spaces, and the morphisms are post-processings of outcomes.\footnote{The use of such a category mirrors the \emph{generalized measurement theories} of~\cite{GMTs}. The main difference is that that setting is only concerned with measurements and not with instruments, i.e.~post-measurement states are not considered.}
Conceptually, the monad works the same in both cases: for a fixed quantum system modelled by an algebra of observables $\Calg$, the functor $\instmon_\Calg$ takes an outcome space $X$ as input and returns $\instmon_\Calg X$, the set of all instruments with outcomes in $X$.
The monad has a unit with components
\begin{equation}
	\label{eta}
	\eta \: : \: X \longrightarrow \instmon_\Calg X,
\end{equation}
which takes every outcome $x \in X$ to the trivial instrument which always returns $x$ and does nothing to the system.
The multiplication has components
\begin{equation}
	\label{mu}
	\mu \: : \: \instmon_\Calg\instmon_\Calg X \longrightarrow \instmon_\Calg X,
\end{equation}
which encodes the sequential composition of instruments: given an instrument $\instb$ which produces outcomes \emph{in the space of instruments} $\instmon_\Calg X$, we can perform $\instb$ to obtain $\inst \in \instmon_\Calg X$ as outcome, and then perform $\inst$ to obtain a final outcome in $X$.

\subsection*{Strong monads}

A monad $T$ on a cartesian monoidal category is \newterm{strong} if it is equipped with a natural way of pairing a pure value with an effectful computation,
\[
	\st \: : \: X\times T Y \longrightarrow T(X\times Y),
\]
in a way that is compatible with the monad structure.
In functional programming terms, this is what lets an effectful computation be used in a context containing ordinary variables: the pure variable is passed along while the effectful computation is performed.
This is part of the standard categorical semantics of computational effects in the sense of Moggi~\cite{Moggi}, and thus one generally expects monads in functional programming to be strong.

For monads on $\Set$, it is well-known that there is a unique strength,\footnote{See e.g.~\cite[Example~3.7 and Proposition~4.7]{McDermottUustalu}.} given by
\begin{equation}
	\label{strength_formula}
	\st(x,\inst) \coloneqq T(y\mapsto(x,y))(\inst).
\end{equation}
In particular, the classical state monad and the finitary quantum instrument monad do not require a separate construction of a strength.
But for the measure-theoretic monad on $\Meas$, using~\eqref{strength_formula} does not automatically work, because the resulting map $X\times T Y\to T(X\times Y)$ must be shown to be measurable.
We therefore construct the strength explicitly in \Cref{prop:ozawa_strength}.

\subsection*{Related literature}

Most importantly, Booth, Leichtle, Rice and Worrall have developed a quantum instrument monad in the Heisenberg picture in independent and concurrent work. The technical details differ.
In particular, the construction of the quantum instrument integral is different. Nevertheless, the resulting monads appear to be closely related.
We refer the to~\cite{BLRW} for this complementary perspective.

Our construction is also related in spirit, but not equivalent, to other uses of monads in quantum foundations and quantum computation, such as the quantum monad on relational structures of Abramsky, Soares Barbosa, de Silva and Zapata~\cite{ASSZ}, and the quantum monadology of Sati and Schreiber~\cite{SS}.
Staton's work on algebraic effects with linear parameters~\cite{StatonAlgebraicEffectsQuantum} constructs a strong monad on $[\mathsf{Bij}, \Set]$ as a semantics for a quantum programming language operating on qubits, where the monad is generated by operations corresponding to preparing a fresh qubit, unitary evolution and measurement.
The main difference is that our monads are built directly from the instrument formalism itself, with the monadic variable representing the classical outcome space; in this sense they are the direct generalization of the classical state monad to the quantum setting.
A closely related construction is given by the linear state monads investigated by M{\o}gelberg, Rennela and Staton, on which we comment more in \cref{rem:linear_state_monads}.

\subsection*{Notation and conventions}

We write $\Set$ for the category of sets and functions, and $\Meas$ for the category of measurable spaces and measurable maps.
Throughout, $\Calg$ denotes a fixed von Neumann algebra of observables with predual $\Calg_*$, whose positive normalized elements are the states of the quantum system under consideration.
In natural transformations like~\eqref{eta} and~\eqref{mu}, we usually omit the component index in order to avoid clutter, which amounts to an abuse of notation where the same symbol denotes a natural transformation and every individual component.

\section{The classical state monad}

In this section, we review the classical state monad~\cite{Wadler_monads} in the language of instruments.
This will set the stage for the quantum instrument monad as a noncommutative generalization.
In order to make the analogy as close as possible, our presentation of the state monad will be somewhat nonstandard.

Let $S$ be a fixed set, thought of as the set of states of a classical system.
We assume that this set comes equipped with a fixed element $0 \in S$, to be thought of as an unnormalized state analogous to the zero density operator in the quantum case.
Then by analogy with the concept of quantum operation, we could say that a \emph{deterministic operation} is any map $S \to S$ with $0 \mapsto 0$, so that
\[
	\Op_S \coloneqq \{ \Phi : S\to S \mid \Phi(0)=0\}.
\]
Such an operation is \emph{normalized} if no other state maps to $0$.
If we have a bunch of operations indexed by outcomes in such a way that ignoring the outcome results in a normalized operation, then we have a deterministic instrument:

\begin{defn}
	\label{defn:classical_instrument}
	A \newterm{deterministic instrument} with values in a set $X$ is a map
	\begin{align*}
		\inst \: : \: X \times S & \longrightarrow S \\
		(x, s) & \longmapsto \inst(x, s)
	\end{align*}
	satisfying the following properties:
	\begin{enumerate}
		\item\label{item:classical_instrument_zero}
			For every $x \in X$, the map $\inst(x, -) : S \to S$ is such that $\inst(x, 0) = 0$.
		\item\label{item:classical_instrument_map}
			For every $s \in S\setminus\{0\}$, there is exactly one $x \in X$ with $\inst(x, s) \neq 0$.
	\end{enumerate}
\end{defn}

The idea is that the unique $x$ with $\inst(x, s) \neq 0$ is the outcome of performing the instrument $\inst$ on the state $s$, and the post-measurement state is $\inst(x, s)$.
For every set $X$, define
\[
	\instmon_S X
	\coloneqq
	\{ \text{deterministic instruments with values in } X \}.
\]
This becomes a functor $\instmon_S:\Set\to\Set$ by post-processing outcomes: for a map $f:X\to Y$ and an instrument $\inst\in\instmon_S X$, define $f_*\inst\in\instmon_S Y$ by
\[
	(f_*\inst)(y,s)
	\coloneqq
	\sum_{x\in f^{-1}(y)} \inst(x,s).
\]
Here, we are using a partially defined addition on $S$, where a sum is defined if and only if at most one summand is nonzero, and in that case the sum is equal to that summand.
Thus the definition of $f_*$ amounts to saying that if the outcome of the instrument $\inst$ on state $s$ is $x$, then the outcome of $f_*\inst$ on $s$ is $f(x)$, with the same post-measurement state $\inst(x, s)$.
The post-processed $f_*\inst$ is again a deterministic instrument: for every nonzero $s$, the unique nonzero outcome of $f_*\inst$ is obtained by applying $f$ to the unique nonzero outcome of $\inst$.

We turn to the definition of the monad structure.
The unit $\eta$ takes every outcome $x\in X$ to the trivial instrument which always returns $x$ and does nothing to the state:
\begin{align*}
	\eta \: : \: X &\longrightarrow \instmon_S X \\
	x &\longmapsto \left((x',s)\mapsto \delta_{x,x'}\,s\right).
\end{align*}
The monad multiplication is sequential composition of instruments.
For $\instb\in\instmon_S\instmon_S X$, define $\mu(\instb)\in\instmon_S X$ by
\begin{equation}
	\label{mu_state}
	\mu(\instb)(x,s)
	\coloneqq
	\sum_{\inst\in\instmon_S X} \inst(x,\instb(\inst,s)).
\end{equation}
This sum is defined in the same partial sense as above.
This is an instrument again because, for $s=0$, every summand is zero, while for $s\neq 0$ there is a unique $\inst_s\in\instmon_S X$ with $\instb(\inst_s,s)\neq 0$, and then a unique $x\in X$ with $\inst_s(x,\instb(\inst_s,s))\neq 0$.
For this unique $\inst_s$, the formula reduces to
\begin{equation}
	\label{mu_state_reduced}
	\mu(\instb)(x,s)
	=
	\inst_s(x,\instb(\inst_s,s)).
\end{equation}
Operationally, this formula encodes the following procedure: given an instrument $\instb$ which produces outcomes in the space of instruments $\instmon_S X$, we perform $\instb$ on the state $s$ to obtain an inner instrument $\inst_s$ as outcome and an updated state $\instb(\inst_s,s)$; then we perform that inner instrument $\inst_s$ on the updated state to get a final outcome in $X$ and the final state~\eqref{mu_state_reduced}.

\begin{thm}
	\label{thm:classical_state_monad}
	With these definitions, $\instmon_S$ is a monad on $\Set$.
\end{thm}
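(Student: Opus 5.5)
The plan is to reduce everything to a concrete description of deterministic instruments and then verify the monad laws pointwise on states $s \neq 0$. A deterministic instrument $\inst \in \instmon_S X$ is equivalent to a pair of functions
\[
	o_\inst : S\setminus\{0\} \to X, \qquad u_\inst : S\setminus\{0\} \to S\setminus\{0\},
\]
namely the outcome $o_\inst(s)$, the unique $x$ with $\inst(x,s)\neq 0$, and the updated state $u_\inst(s) \coloneqq \inst(o_\inst(s),s)$. Indeed $\inst(x,s) = \delta_{x,o_\inst(s)}\, u_\inst(s)$ for $s\neq 0$, and $\inst(x,0)=0$ by \ref{item:classical_instrument_zero}. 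So $\instmon_S X \cong \{ S\setminus\{0\} \to X \times (S\setminus\{0\})\}$, which is the usual state monad with state set $S' = S\setminus\{0\}$. The quickest route is to check that this bijection transports $f_*$, $\eta$ and $\mu$ to the standard state monad structure, and then invoke that the standard one is a monad. For a self-contained argument I would instead verify the laws directly in these coordinates.

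\textbf{Step 1 (translation).} Under this bijection:
\begin{itemize}
	\item $f_*\inst$ has outcome $f\circ o_\inst$ and update $u_\inst$, as already observed in the text;
	\item $\eta(x)$ has constant outcome $x$ and update $\id$;
	\item by \eqref{mu_state_reduced}, $\mu(\instb)$ has outcome $s \mapsto o_{\inst_s}(u_\instb(s))$ and update $s\mapsto u_{\inst_s}(u_\instb(s))$, where $\inst_s = o_\instb(s)$.
\end{itemize}
Functoriality of $\instmon_S$ ($\id_* = \id$ and $(g\circ f)_* = g_*\circ f_*$) is then immediate.

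\textbf{Step 2 (naturality).} For $\eta$, we have $f_*\eta(x) = \eta(f(x))$ since both have constant outcome $f(x)$ and identity update. For $\mu$, we compare $f_*\mu(\instb)$ with $\mu((f_*)_*\instb)$. The outer instrument $(f_*)_*\instb$ has outcome $s\mapsto f_*\inst_s$ and the same update $u_\instb$. Applying $\mu$ therefore gives outcome $f(o_{\inst_s}(u_\instb(s)))$ and update $u_{\inst_s}(u_\instb(s))$, which agrees with $f_*\mu(\instb)$.

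\textbf{Step 3 (unit laws).}
\begin{itemize}
	\item $\mu(\eta(\inst))$: here the outer outcome is always $\inst$ and the update is $\id$, so the composite is $\inst$.
	\item $\mu(\eta_*\inst)$: the outer outcome is $\eta(o_\inst(s))$ with update $u_\inst$. Composing yields outcome $o_\inst(s)$ and update $u_\inst(s)$, i.e.\ $\inst$ again.
\end{itemize}

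\textbf{Step 4 (associativity).} This is the only step with real bookkeeping. Take $\mathcal{G}\in\instmon_S\instmon_S\instmon_S X$ and write
\[
	\instb_s \coloneqq o_{\mathcal{G}}(s), \qquad s_1 \coloneqq u_{\mathcal{G}}(s), \qquad \inst \coloneqq o_{\instb_s}(s_1), \qquad s_2 \coloneqq u_{\instb_s}(s_1).
\]
We need $\mu(\mu(\mathcal{G})) = \mu(\mu_*\mathcal{G})$.
\begin{itemize}
	\item $\mu(\mathcal{G})$ has outcome $\inst$ and update $s_2$. Hence $\mu(\mu(\mathcal{G}))$ has outcome $o_\inst(s_2)$ and update $u_\inst(s_2)$.
	\item $\mu_*\mathcal{G}$ has outcome $\mu(\instb_s)$ and update $s_1$. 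Hence $\mu(\mu_*\mathcal{G})$ evaluates $\mu(\instb_s)$ at $s_1$. By Step 1 this has outcome $o_\inst(s_2)$ and update $u_\inst(s_2)$.
\end{itemize}
Both sides agree.

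The main obstacle is only to make sure the partial sums in \eqref{mu_state} and the definition of $f_*$ are always defined and reduce to the single nonzero term, including at $s = 0$ where everything is $0$. The text preceding the statement already establishes this, and the bijection of the first paragraph packages it once and for all.
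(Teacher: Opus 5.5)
Your proof is correct, but it takes a different route from the paper. The paper never decomposes an instrument into outcome and update. It defines $\theta\colon\instmon_S X\to\Cont_{\Op_S}X$ by $\theta(\inst)(L)(s)=\sum_{x}L(x)(\inst(x,s))$, shows that $\theta$ is natural, injective and compatible with the units and multiplications, and then pulls naturality of $\eta,\mu$ and the monad laws back from the continuation monad. You instead use condition (ii) to identify $\instmon_S X$ with $(X\times S')^{S'}$, where $S'=S\setminus\{0\}$, and check that $f_*$, $\eta$, $\mu$ become the usual state-monad structure. Your translations in Step 1 and the checks in Steps 2--4 are all accurate, including the degenerate behaviour at $s=0$.

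Your route is more elementary and more informative in the classical case. It shows that the paper's ``somewhat nonstandard'' presentation is isomorphic, as a monad, to the ordinary state monad on $S\setminus\{0\}$. It also turns associativity into a routine computation rather than the ``obscure'' verification the paper wants to avoid. What it does not give is a template for the quantum sections. There an instrument spreads the state over many outcomes, so no outcome/update decomposition exists. The embedding $\theta$, by contrast, carries over verbatim, with partial sums replaced by finite sums and then by quantum operation integrals, and the only nontrivial input for the multiplication square becomes the Fubini lemma. That transferability is why the paper proves even the classical case via $\Cont$.
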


Although it would be possible to verify the monad axioms directly, this is somewhat obscure especially for the associativity law.
Instead, we conduct the proof of the monad laws by constructing an embedding into a continuation monad (\Cref{app:continuation_monads}).

\begin{proof}
	Functoriality follows by the partial sum computation
	\[
		(g_*f_*\inst)(z,s)
		=
		\sum_{y\in g^{-1}(z)}\sum_{x\in f^{-1}(y)}\inst(x,s)
		=
		\sum_{x\in (g\circ f)^{-1}(z)}\inst(x,s).
	\]
	We prove the monad axioms by embedding $\instmon_S$ into a continuation monad.
	For every set $X$, define
	\begin{align*}
		\theta \: : \: \instmon_S X &\longrightarrow \Cont_{\Op_S} X \\
		\inst &\longmapsto \left(L\mapsto \left[s\mapsto \sum_{x\in X} L(x) \left(\inst(x,s)\right) \right]\right),
	\end{align*}
	where $L:X\to \Op_S$ is a continuation, and we leave it understood that in every term, $L$ gets applied to $\inst(x,s)$.
	The sum is defined in the same partial sense as above.
	The resulting map $s\mapsto \sum_x L(x) \left(\inst(x,s)\right)$ is indeed an operation, since every summand is zero when $s=0$.

	We now prove the following claims about $\theta$, writing $\instmon$ for $\instmon_S$ and $\Cont$ for $\Cont_{\Op_S}$.
	\begin{enumerate}
		\item $\theta$ is natural.

			For $f:X\to Y$, this means that the square
			\[
				\begin{tikzcd}
					\instmon X \arrow[r, "\theta"] \arrow[d, "f_*"'] & \Cont\, X \arrow[d, "\Cont\, f"] \\
					\instmon Y \arrow[r, "\theta"] & \Cont\, Y
				\end{tikzcd}
			\]
			commutes.
			We start with $\inst\in\instmon X$ top left.
			Taking the lower left path, applying the result to a continuation $L:Y\to \Op_S$ and evaluating on a state $s\in S$ gives
			\begin{align*}
				\theta(f_*\inst)(L)(s)
			&=
			\sum_{y\in Y}L(y) \left((f_*\inst)(y,s)\right) \\
			&=
			\sum_{y\in Y}L(y) \left(\sum_{x\in f^{-1}(y)}\inst(x,s)\right) \\
			&=
			\sum_{y\in Y} \sum_{x \in f^{-1}(y)} L(y) \left(\inst(x,s)\right) \\
			&=
			\sum_{x\in X}L(f(x)) \left(\inst(x,s)\right) \\
			&=
			\theta(\inst)(L\circ f)(s)
			=
			(\Cont\, f)(\theta(\inst))(L)(s).
			\end{align*}
		\item $\theta$ is injective.

			Indeed, for $x\in X$, consider the continuation $L_x:X\to \Op_S$ given by
			\[
				L_x(x') \coloneqq \delta_{x,x'}\id_S,
			\]
			by which we mean the constant zero operation for $x'\neq x$ and the identity operation for $x'=x$.
			Then
			\[
				\theta(\inst)(L_x)=\inst(x,-).
			\]
			This shows that $\theta(\inst)$ determines $\inst(x,-)$ for every $x\in X$, and hence $\inst$ itself, making $\theta$ injective.

		\item $\theta$ preserves the unit in the sense that the diagram
			\[
				\begin{tikzcd}
			& X \arrow[dl, "\eta"'] \arrow[dr, "\eps"] & \\
				\instmon X \arrow[rr, "\theta"'] && \Cont\, X
				\end{tikzcd}
			\]
			commutes: again evaluating on $L:X\to \Op_S$ and $s\in S$ gives
			\[
				\theta(\eta(x))(L)(s)
				=
				\sum_{x'\in X}L(x') \left(\delta_{x,x'}s\right)
				=
				L(x) \left(s\right)
				=
				\eps(x)(L)(s).
			\]
		\item $\theta$ also preserves the multiplication in the sense that the diagram
			\[
				\begin{tikzcd}[column sep=large]
					\instmon\instmon X \arrow[r, "\theta"] \arrow[d, "\mu"'] & \Cont\,\instmon X \arrow[r, "\Cont\,\theta"] & \Cont\, \Cont\, X \arrow[d, "\nu"] \\
					\instmon X \arrow[rr, "\theta"'] && \Cont\, X
				\end{tikzcd}
			\]
			commutes.
			Indeed for $\instb\in\instmon\instmon X$, evaluating on $L:X\to \Op_S$ and $s\in S$ gives the following.
			Here $\ev_L:\Cont\,X\to\Op_S$ is evaluation at $L$, so that $(\ev_L \circ \theta)(\inst)=\theta(\inst)(L)$.
			\begin{align*}
				(\theta\circ\mu)(\instb)(L)(s)
				&=
				\theta(\mu(\instb))(L)(s) \\
				&=
				\sum_{x\in X}L(x) \left(\mu(\instb)(x,s)\right) \\
				&=
				\sum_{x\in X}L(x) \left(\sum_{\inst\in\instmon X}\inst(x,\instb(\inst,s))\right) \\
				&=
				\sum_{\inst\in\instmon X}
				\sum_{x\in X}L(x) \left(\inst(x,\instb(\inst,s))\right).
			\end{align*}
			On the other hand,
				\begin{align*}
					(\nu \circ (\Cont\,\theta) \circ \theta)(\instb)(L)(s)
					&=
					(\Cont\,\theta)(\theta(\instb))(\ev_L)(s) \\
					&=
					\theta(\instb)(\ev_L\circ\,\theta)(s) \\
					&=
					\sum_{\inst\in\instmon X}(\ev_L\circ\,\theta)(\inst) \left(\instb(\inst,s)\right) \\
					&=
					\sum_{\inst\in\instmon X}\theta(\inst)(L) \left(\instb(\inst,s)\right) \\
					&=
					\sum_{\inst\in\instmon X}
					\sum_{x\in X}L(x) \left(\inst(x,\instb(\inst,s))\right).
				\end{align*}
	\end{enumerate}
	Hence $\theta$ is natural, injective, and compatible with unit and multiplication.
	Therefore the naturality of $\eta$ and $\mu$ and the monad laws for $\instmon$ follow from the corresponding facts for $\Cont$, and $\theta$ is a morphism of monads.
\end{proof}

\section{The finitary quantum instrument monad}

We now turn to the quantum case, using the Schr\"odinger picture.
In this section, the quantum system is described by an algebra of observables $\Calg$ which we assume to be a von Neumann algebra, and states are normal linear functionals on $\Calg$, i.e.~elements of the predual $\Calg_*$ which are positive and normalized with respect to the canonical trace functional
\[
	\tau \: : \: \Calg_* \longrightarrow \C,
\]
which is given by the pairing with $1\in\Calg$.
In this section and the next, we will construct monads $\instmon_\Calg$ which model the possible instruments that can be performed on such a quantum system.\footnote{Modulo the caveat that the measure-theoretic version will require additional assumptions on $\Calg$, which force it to be of type I (\cref{rem:radon_nikodym_property}).}

In this section, we start this endeavour with a simpler version of the quantum instrument monad that lives on $\Set$ and is finitary.
Here, the word \emph{finitary} indicates both that each instrument has only finitely many possible outcomes, and that the resulting monad $\instmon_\Calg$ is a finitary monad, i.e.~its underlying endofunctor preserves filtered colimits.\footnote{This is equivalent to the statement that every element of $\instmon_\Calg X$ lies in the image of $\instmon_\Calg Y \to \instmon_\Calg X$ for some finite subset $Y \subseteq X$~\cite[Remark~2.3(2)]{AMMU}, and the latter is exactly what we formally mean by each instrument having only finitely many possible outcomes.}
The construction of this monad is going to be closely parallel to the classical state monad from the previous section.
We first recall a basic definition.

\begin{defn}
	A \newterm{quantum operation} on $\Calg$ is a completely positive linear map $\Phi : \Calg_* \to \Calg_*$ that is trace-nonincreasing, meaning that
	\[
		\tau(\Phi(\rho)) \leq \tau(\rho) \qquad \forall \rho \in \Calg_{*,+}.
	\]
	We write $\Op(\Calg)$ for the set of quantum operations.
\end{defn}

An instrument now consists of a family of quantum operations indexed by outcomes, such that the total operation obtained by ignoring the outcome is normalized.

\begin{defn}
	\label{defn:finitary_quantum_instrument}
	A \newterm{finitary quantum instrument} with values in a set $X$ is a map
	\begin{align*}
		\inst \: : \: X \times \Calg_* & \longrightarrow \Calg_* \\
		(x,\rho) & \longmapsto \inst(x,\rho)
	\end{align*}
	satisfying the following properties:
	\begin{enumerate}
		\item\label{item:quantum_instrument_cp}
			For every $x \in X$, the map $\inst(x, -) : \Calg_* \to \Calg_*$ is linear, completely positive, and trace-nonincreasing, meaning that
			\[
				\tau(\inst(x,\rho))\leq\tau(\rho)
				\qquad
				\forall \rho\in\Calg_{*,+}.
			\]
		\item\label{item:quantum_instrument_tp}
			The total map $\sum_{x \in X}\inst(x,-)$ is trace-preserving, i.e.
			\begin{equation}
				\label{eq:quantum_instrument_tp}
				\tau\left(\sum_{x\in X}\inst(x,\rho)\right)
				=
				\tau(\rho)
				\qquad
				\forall \rho\in\Calg_*.
			\end{equation}
		\item\label{item:quantum_instrument_finitary}
			We have $\inst(x, -) = 0$ for all but finitely many $x \in X$.
	\end{enumerate}
\end{defn}

Note that conditions~\ref{item:quantum_instrument_cp} and~\ref{item:quantum_instrument_tp} are analogous to conditions \cref{item:classical_instrument_zero} and~\cref{item:classical_instrument_map} in the definition of deterministic instrument (\cref{defn:classical_instrument}).
The finiteness condition~\ref{item:quantum_instrument_finitary} guarantees in particular that the sum in~\eqref{eq:quantum_instrument_tp} is effectively finite.

\begin{rem}
	For finite $X$, this is the same data as a completely positive trace-preserving map
	\begin{equation}
		\label{instrument_map}
		\Calg_* \longrightarrow \C^X \otimes \Calg_*,
	\end{equation}
	representing the Schr\"odinger-picture evolution from the initial system state to a classical pointer joint with the final system state.
	Formally, this description is obtained from $\inst$ by sending $\rho$ to
	\[
		\sum_{x \in X} e_x \otimes \inst(x,\rho),
	\]
	where $\{e_x\}_{x \in X}$ is the standard basis of $\C^X$.
\end{rem}

To obtain the monad, define
\[
	\instmon_\Calg X \coloneqq \{ \text{finitary quantum instruments with values in } X \}.
\]
This becomes a functor $\instmon_\Calg \: \colon \: \Set \to \Set$ with the action on morphisms given by post-processing of outcomes: for any $f : X \to Y$ and any $\inst \in \instmon_\Calg X$, we define $f_* \inst \in \instmon_\Calg Y$ by
\[
	(f_* \inst)(y,\rho) \coloneqq \sum_{x \in f^{-1}(y)} \inst(x,\rho),
\]
and it is straightforward to see that this is an instrument again.
Let us now turn to the monad structure.
The unit $\eta$ takes every outcome $x \in X$ to the trivial instrument which always returns $x$ and does nothing to the system,
which reads
\begin{align*}
	\eta \: : \: X &\longrightarrow \instmon_\Calg X \\
	x &\longmapsto \left( (x',\rho) \mapsto \delta_{x,x'}\,\rho \right).
\end{align*}
The multiplication $\mu$ is given by
\begin{align}
	\label{eq:mu}
	\begin{split}
		\mu \: : \: \instmon_\Calg\instmon_\Calg X &\longrightarrow \instmon_\Calg X \\
		\instb &\longmapsto \left( (x,\rho) \mapsto \sum_{\inst \in \instmon_\Calg X} \inst(x,\instb(\inst,\rho)) \right).
	\end{split}
\end{align}
In line with the same idea as for the classical state monad, the right-hand side is the Schr\"odinger-picture version of performing the outer instrument $\instb$ to obtain an inner instrument $\inst$ as outcome and an updated state $\instb(\inst,\rho)$, and then performing that inner instrument on the updated state to get an outcome $x$.
But let us convince ourselves that this actually makes sense formally.

\begin{lem}
	The right-hand side of~\eqref{eq:mu} indeed defines a finitary quantum instrument.
\end{lem}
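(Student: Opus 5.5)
We will check the three conditions of \cref{defn:finitary_quantum_instrument} for the map $\mu(\instb)(x,\rho) \coloneqq \sum_{\inst} \inst(x,\instb(\inst,\rho))$. Before doing so we make sure the sum is well defined. By condition~\ref{item:quantum_instrument_finitary} applied to $\instb \in \instmon_\Calg\instmon_\Calg X$, there is a finite set $F \subseteq \instmon_\Calg X$ such that $\instb(\inst,-)=0$ for $\inst\notin F$. For $\inst\notin F$ the summand is $\inst(x,0)=0$, because $\inst(x,-)$ is linear. Hence the sum is a finite sum over $F$ and defines an element of $\Calg_*$.

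For condition~\ref{item:quantum_instrument_cp}, fix $x$. The map $\rho\mapsto \mu(\instb)(x,\rho)$ is the finite sum over $\inst\in F$ of the composites $\inst(x,-)\circ\instb(\inst,-)$. Composites of linear completely positive maps are linear and completely positive, and so are finite sums of such maps, so linearity and complete positivity follow. We postpone the trace-nonincreasing property: it will follow from the trace identity below, since every summand is positive for $\rho\geq 0$, and so a single $x$-component has trace at most the total.

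For condition~\ref{item:quantum_instrument_finitary}, let $G\subseteq X$ be the union over $\inst\in F$ of the finite sets of $x$ with $\inst(x,-)\neq 0$. This is a finite union of finite sets, hence finite. For $x\notin G$ every summand vanishes, so $\mu(\instb)(x,-)=0$.

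The remaining step, which carries the content, is the trace identity~\eqref{eq:quantum_instrument_tp}. All sums are finite, so we may exchange them freely. For $\rho\in\Calg_*$ we compute
\[
	\tau\Bigl(\sum_{x\in G}\sum_{\inst\in F}\inst(x,\instb(\inst,\rho))\Bigr)
	= \sum_{\inst\in F}\tau\Bigl(\sum_{x\in X}\inst(x,\instb(\inst,\rho))\Bigr)
	= \sum_{\inst\in F}\tau(\instb(\inst,\rho))
	= \tau(\rho).
\]
The first equality uses linearity of $\tau$. The second applies~\eqref{eq:quantum_instrument_tp} for each inner instrument $\inst$ at the state $\instb(\inst,\rho)$, noting that~\eqref{eq:quantum_instrument_tp} is stated for all elements of $\Calg_*$, not only positive ones. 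The third applies~\eqref{eq:quantum_instrument_tp} for $\instb$.

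Finally, trace-nonincreasingness for each fixed $x$ follows from this identity. For $\rho\geq 0$ all terms $\inst(x',\instb(\inst,\rho))$ are positive, so $\tau$ of each is nonnegative. Dropping the terms with $x'\neq x$ therefore gives $\tau(\mu(\instb)(x,\rho))\leq\tau(\rho)$.

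I expect no real obstacle. The only points needing care are the finiteness bookkeeping that justifies exchanging the sums, and the fact that we use complete positivity of the outer components $\instb(\inst,-)$ to feed positive states into the inner instruments.
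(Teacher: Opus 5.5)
Your proposal is correct and follows essentially the same route as the paper's proof. The finite support of $\instb$ and of each inner instrument makes every sum finite, and linearity and complete positivity come from finite sums of composites. The trace identity, obtained by exchanging finite sums, gives trace preservation of the total map and, via positivity of the summands, trace-nonincrease of each component.
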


\begin{proof}
	The sum converges in the trivial sense that only finitely many terms can be nonzero.
	For fixed $x \in X$, the map
	\[
		\rho \longmapsto \sum_{\inst \in \instmon_\Calg X} \inst(x,\instb(\inst,\rho))
	\]
	is linear and completely positive, being a finite sum of composites of linear completely positive maps.
	Finitarity holds because only finitely many $\inst$ satisfy $\instb(\inst, -) \neq 0$, and each such $\inst$ itself has finite support, so that only finitely many $x$ can satisfy $\mu(\instb)(x, -) \neq 0$.
	For $\rho\in\Calg_*$, the total trace is
	\begin{align*}
		\tau\left(\sum_{x \in X} \mu(\instb)(x,\rho)\right)
			&= \sum_{x \in X} \sum_{\inst \in \instmon_\Calg X}
			\tau\bigl(\inst(x,\instb(\inst,\rho))\bigr) \\
			&= \sum_{\inst \in \instmon_\Calg X}
			\tau\left(\sum_{x \in X}\inst(x,\instb(\inst,\rho))\right) \\
			&= \sum_{\inst \in \instmon_\Calg X} \tau(\instb(\inst,\rho)) \\
			&= \tau(\rho).
	\end{align*}
	By linearity, the total map is trace-preserving on all of $\Calg_*$.
	Since every summand is positive on positive inputs, this identity also implies that each component $\mu(\instb)(x,-)$ is trace-nonincreasing.
	Therefore $\mu(\instb)$ is indeed a finitary quantum instrument.
\end{proof}

\begin{thm}
	With these definitions, $\instmon_\Calg$ becomes a monad on $\Set$.
\end{thm}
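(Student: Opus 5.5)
The plan is to mirror the proof of \Cref{thm:classical_state_monad}: embed $\instmon_\Calg$ into the continuation monad $\Cont_{\Op(\Calg)}$ and transfer the monad laws along an injective natural transformation that preserves unit and multiplication. Functoriality of $f\mapsto f_*$ is immediate, since
\[
	(g_*f_*\inst)(z,\rho)=\sum_{y\in g^{-1}(z)}\sum_{x\in f^{-1}(y)}\inst(x,\rho)=\sum_{x\in(g\circ f)^{-1}(z)}\inst(x,\rho),
\]
where all sums are finite by finitarity. The Lemma above already ensures that $\mu$ lands in $\instmon_\Calg X$. The same trace computation with the single operation $\rho\mapsto\rho$ shows that $\eta(x)$ is an instrument.

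For every set $X$, define $\theta:\instmon_\Calg X\to\Cont_{\Op(\Calg)}X$ by
\[
	\theta(\inst)(L)(\rho)\coloneqq\sum_{x\in X}L(x)\bigl(\inst(x,\rho)\bigr)
\]
for continuations $L:X\to\Op(\Calg)$. This sum is finite. It is a composite of completely positive maps, and it is trace-nonincreasing because
\[
	\tau\Bigl(\sum_x L(x)(\inst(x,\rho))\Bigr)\le\sum_x\tau(\inst(x,\rho))=\tau(\rho)
\]
for positive $\rho$. So it is a genuine element of $\Op(\Calg)$. The four claims then follow the classical proof line by line, with partial sums replaced by finite sums of linear maps.

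\begin{enumerate}
	\item \emph{Naturality.} Expand $\theta(f_*\inst)(L)(\rho)$, pull the inner sum through the linear map $L(y)$, and reindex to obtain $\theta(\inst)(L\circ f)(\rho)$.
	\item \emph{Injectivity.} Use the continuation $L_x(x')=\delta_{x,x'}\id$. This is a valid quantum operation, and $\theta(\inst)(L_x)=\inst(x,-)$.
	\item \emph{Unit.} We have $\theta(\eta(x))(L)=L(x)$, which is $\eps(x)(L)$.
	\item \emph{Multiplication.} For $\instb\in\instmon_\Calg\instmon_\Calg X$, both $\theta(\mu(\instb))(L)(\rho)$ and $\nu(\Cont\,\theta(\theta(\instb)))(L)(\rho)=\theta(\instb)(\ev_L\circ\theta)(\rho)$ expand to
	\[
		\sum_{\inst}\sum_x L(x)\bigl(\inst(x,\instb(\inst,\rho))\bigr).
	\]
	The only step needed here is linearity of each $L(x)$, which lets it pass through the finite sum over $\inst$.
\end{enumerate}

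The main point requiring care is that $\ev_L\circ\theta:\instmon_\Calg X\to\Op(\Calg)$ is a legitimate continuation. This holds by the well-definedness of $\theta$ shown above, so the multiplication of $\Cont_{\Op(\Calg)}$ from \Cref{app:continuation_monads} applies verbatim. Given these four facts, $\theta$ is an injective natural transformation commuting with unit and multiplication. Naturality of $\eta$ and $\mu$ and the unit and associativity laws for $\instmon_\Calg$ then follow by applying $\theta$ and using the corresponding laws of $\Cont$ together with injectivity. For example, for associativity we compute
\[
	\theta\circ\mu\circ\mu=\nu\circ\nu\circ(\text{nested }\theta\text{'s})=\nu\circ\Cont\,\nu\circ(\cdots)=\theta\circ\mu\circ\instmon\mu,
\]
using naturality of $\theta$ to move the $\theta$'s past $\instmon\mu$.
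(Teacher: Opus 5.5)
Your proposal is correct and follows essentially the same route as the paper. You embed $\instmon_\Calg$ into $\Cont_{\Op(\Calg)}$ via $\theta$, check that $\theta(\inst)(L)$ is a quantum operation, and verify naturality, injectivity via $L_x$, and compatibility with unit and multiplication as in the classical case, with linearity of the operations $L(x)$ replacing the partial-sum manipulations; your explicit transfer of the monad laws just spells out what the paper calls ``verbatim the same as before'', and the one thing you leave unsaid, that $f_*\inst$ is again a finitary instrument, is a routine check the paper also treats as straightforward.
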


\begin{proof}
	Post-processing preserves finitary quantum instruments, and functoriality follows by rearranging finite sums exactly as in the classical case.
	The unit is visibly a finitary quantum instrument, and the multiplication is well-defined by the preceding lemma.

	Put $R\coloneqq\Op(\Calg)$ and consider the continuation monad $\Cont_R$ on $\Set$.
	For the rest of the proof, we leave the subscripts $\Calg$ and $\Op(\Calg)$ implicit, writing $\instmon$ and $\Cont$ for the two monads.
	For every set $X$, define
	\begin{align*}
		\theta \: : \: \instmon X &\longrightarrow \Cont\, X \\
		\inst &\longmapsto \left(L\mapsto\left[\rho\mapsto\sum_{x\in X} L(x) \left(\inst(x,\rho)\right)\right]\right),
	\end{align*}
	where $L:X\to R$.
	The value $\theta(\inst)(L)$ is a quantum operation: it is a finite sum of composites of quantum operations, and for $\rho\in\Calg_{*,+}$,
	we have
	\[
		\tau\left(\sum_{x\in X}L(x) \left(\inst(x,\rho)\right)\right)
		\leq
		\sum_{x\in X}\tau(\inst(x,\rho))
		=
		\tau(\rho).
	\]

	The verification that $\theta$ is natural and compatible with the unit and multiplication is the finite-linear version of the proof of \Cref{thm:classical_state_monad}.
	The only difference is that the partially defined sums there are replaced by ordinary finite sums here, and the relevant operations are quantum operations rather than deterministic operations.
	Thus for $f:X\to Y$ and $L:Y\to R$,
	\begin{align*}
		\theta(f_*\inst)(L)(\rho)
		&=
		\sum_{y\in Y}L(y) \left(\sum_{x\in f^{-1}(y)}\inst(x,\rho)\right) \\
		&=
		\sum_{x\in X}L(f(x)) \left(\inst(x,\rho)\right)
		=
		(\Cont\, f)(\theta(\inst))(L)(\rho),
	\end{align*}
	where the middle equality uses linearity of the quantum operations $L(y)$.
	Injectivity is proved as before: for $x\in X$, take $L_x:X\to R$ defined as
	\[
		L_x(x') \coloneqq \delta_{x,x'}\id_{\Calg_*},
	\]
	and again $\theta(\inst)(L_x)=\inst(x,-)$.
	The rest of the proof is verbatim the same as before.
\end{proof}

\begin{rem}
	We expect that an analogous instrument monad can be defined for any generalized probabilistic theory, and possibly also for any generalized measurement theory in the sense of~\cite{GMTs}.
\end{rem}

We thank Sam Staton for pointing out the following alternative construction of the monad, and for suggesting a generalization which allows for quantum instruments between two distinct systems.

\begin{rem}
	\label{rem:linear_state_monads}
	\begin{enumerate}
		\item
			We now describe how our finitary quantum instrument monad is related to the \emph{linear state monads} of M{\o}gelberg and Staton~\cite{MogelbergStatonLinearUsage}.

			Let $\mathsf{C}$ be a closed symmetric monoidal category, thought of as classical data processing,
			and let $\mathsf{Q}$ be a $\mathsf{C}$-enriched category, thought of as quantum systems and processes,
			which has copowers by objects of $\mathsf{C}$.
			This means that, for every object $A \in \mathsf{Q}$, there is a $\mathsf{C}$-functor $(-)\otimes A : \mathsf{C}\to\mathsf{Q}$ satisfying the adjointness relation
			\[
				\mathsf{Q}(X\otimes A, B) \,\cong\, \mathsf{C}(X, \mathsf{Q}(A,B)).
			\]
			The counit is the evaluation morphism $\mathsf{Q}(A,B)\otimes A \to B$ corresponding to the identity on $\mathsf{Q}(A,B)$.
			For a fixed object $A \in \mathsf{Q}$, this induces a monad on $\mathsf{C}$ with underlying endofunctor
			\[
				X \longmapsto \mathsf{Q}(A,X\otimes A).
			\]
			The multiplication
			\[
				\mathsf{Q}(A,\mathsf{Q}(A,X\otimes A)\otimes A)
				\longrightarrow
				\mathsf{Q}(A,X\otimes A)
			\]
			is obtained by applying $\mathsf{Q}(A,-)$ to the counit.

			This monad behaves like an instrument monad:
			using element language for $\mathsf{C}$,
			an element of the left-hand side is a morphism $A\to\mathsf{Q}(A,X\otimes A)\otimes A$,
			and on an input state in $A$ it outputs a process $A\to X\otimes A$ together with an updated state of type $A$.
			The multiplication then applies the produced process to that updated state, and views the result as a single process $A\to X\otimes A$.

			If one takes $\mathsf{C}=\Set$ and lets $\mathsf{Q}$ be the Schr\"odinger-picture category whose objects are von Neumann algebras and whose hom-set $\mathsf{Q}(\Calg, \mathcal{B})$ is the set of completely positive trace-preserving maps $\Calg_*\to\mathcal{B}_*$,
			then copowers by arbitrary sets exist:
			the copower of $\Calg$ by a set $X$ is the von Neumann algebra
			\[
				\ell^\infty(X) \otimes \Calg
				\,\cong\,
				\ell^\infty(X,\Calg)
			\]
			with predual $\ell^1(X,\Calg_*)$.
			Hence $\mathsf{Q}(\Calg,X\otimes\Calg)$ is the set of completely positive trace-preserving maps $\Calg_*\to\ell^1(X,\Calg_*)$.
			For finite $X$, these are precisely the instruments in the form~\eqref{instrument_map},
			while for infinite $X$ they correspond to instruments with discrete outcomes in $X$ without a finite support condition.
			In this way, the linear state monad is another version of the quantum instrument monad.
			
			To obtain the finitary quantum instrument monad itself, one can use the equivalence between finitary monads on $\Set$
			and relative monads of type $\FinSet \to \Set$~\cite[p.~30]{AltenkirchChapmanUustalu}.
			It is thus enough to construct a relative monad from a relative version of the above adjunction,
			where copowers are only required to exist with respect to a full subcategory $\mathsf{C}_0 \subseteq \mathsf{C}$,
			which can be taken to be $\FinSet$.
			Then Rennela and Staton have also considered a relative version of linear state monads in their work on categorical models of the EWire calculus~\cite[Definition~2.1(6)]{RennelaStatonEWire}.
			Applying this to the same $\mathsf{C}$ and $\mathsf{Q}$ as above together with $\mathsf{C}_0 = \FinSet$ should recover our finitary quantum instrument monad.

			On the other hand, it does not seem to be possible to obtain our measure-theoretic quantum instrument monad (\cref{sec:ozawa_monad}) in a similar vein.
		\item
			Our quantum instrument monads assume that the quantum system before and after the instrument is the same.
			However, there is a more general framework for interactions with systems where the system type may change during the interaction,
			namely the \emph{parameterized monads} of Atkey~\cite{AtkeyParameterizedMonads}.

			Roughly, a parameterized monad is a generalization of a monad where the ``instruments'' have an initial and a final parameter.
			Thus, instead of a single endofunctor $T$, one has endofunctors $T_{A,B}$ functorially depending on parameter objects $A$ and $B$, and equipped with additional units $X\to T_{A,A}X$ and multiplications $T_{A,B}T_{B,C}X\to T_{A,C}X$ satisfying appropriate associativity and unitality conditions.
			M{\o}gelberg and Staton have shown that their linear state monads actually generate parameterized monads in this sense~\cite[Proposition~9.1]{MogelbergStatonLinearUsage}.

			In the present setting, the parameters should again be the category of von Neumann algebras and completely positive trace-preserving maps between their preduals.
			Like this, we expect that one can define parameterized generalizations of both the finitary and the measure-theoretic quantum instrument monads.
	\end{enumerate}
\end{rem}

\section{The measure-theoretic quantum instrument monad}
\label{sec:ozawa_monad}

We now describe a measure-theoretic version of the quantum instrument monad.
As in the finitary case, the quantum system is described by a von Neumann algebra $\Calg$, and we work in the Schr\"odinger picture on its predual $\Calg_*$, regarded as the Banach space of normal linear functionals on $\Calg$.
We now assume that $\Calg_*$ is separable and continue to write
\[
	\tau \: : \: \Calg_* \longrightarrow \C
\]
for the canonical trace functional, i.e.~the pairing with $1 \in \Calg$.
As outcome spaces, we now consider measurable spaces $(X, \Sigma_X)$, where $\Sigma_X$ is a $\sigma$-algebra of subsets of $X$.
By standard abuse of notation, we often leave $\Sigma_X$ implicit and just write $X$ for the measurable space.

In addition, for the integration argument below we assume that the predual $\Calg_*$ has the Radon--Nikodym property in the Banach space sense.

\begin{rem}
	\label{rem:radon_nikodym_property}
	This Radon--Nikodym property holds in particular in the type I situation $\Calg=\mathcal B(\mathcal H)$ with $\mathcal H$ a separable Hilbert space,
	since then $\Calg_*$ is the space of trace class operators $\mathcal T(\mathcal H)$.
	More generally, for any von Neumann algebra $\Calg$ the Radon--Nikodym property for $\Calg_*$ is equivalent to pure atomicity of $\Calg$~\cite{ChuScatteredRNP}.
	Thus, if $\Calg$ has separable predual, then $\Calg_*$ has the Radon--Nikodym property precisely when there is a direct sum decomposition
	\[
		\Calg \cong \bigoplus_{n=1}^{\infty} \mathcal B(\mathcal H_n)
	\]
	for a sequence of separable Hilbert spaces $\mathcal H_n$.
	In this case $\Calg_*$ is the $\ell^1$-direct sum of the corresponding spaces $\mathcal T(\mathcal H_n)$.
	In particular, the Radon--Nikodym property fails for diffuse examples such as $L^\infty([0,1])$ and for the type III algebras of algebraic quantum field theory.
	Finding a suitable quantum instrument monad that applies in such cases is an interesting open problem.
\end{rem}

Every $\Phi \in \Op(\Calg)$ is a contraction on $\Calg_*$.
Concerning the measurable structure, we equip $\Calg_*$ with the Borel $\sigma$-algebra of its norm topology, and $\Op(\Calg)$ with the smallest $\sigma$-algebra which makes the evaluation maps
\begin{align*}
	\Op(\Calg) &\longrightarrow \Calg_* \\
	\Phi &\longmapsto \Phi(\rho)
\end{align*}
for $\rho \in \Calg_*$ measurable.

\begin{rem}
	\label{rem:operation_space_standard_borel}
	Since $\Calg_*$ is a separable Banach space, its norm topology is Polish, and hence its Borel measurable structure is standard Borel.
	The measurable space $\Op(\Calg)$ is standard Borel as well.

	To see the latter, choose a countable dense subset $(\rho_n)_{n\in\N}$ of $\Calg_*$ and consider the map
	\begin{align*}
		\Op(\Calg) & \longrightarrow (\Calg_*)^\N \\
		\Phi & \longmapsto (\Phi(\rho_n))_{n\in\N}.
	\end{align*}
	This map is injective because quantum operations are contractions and hence determined by their values on a dense subset.
	Moreover, the $\sigma$-algebra above is already generated by these countably many evaluation maps, since evaluation at any $\rho\in\Calg_*$ is the pointwise limit of evaluations at a sequence of $\rho_n$ converging to $\rho$.
	The image is a closed subset of the Polish space $(\Calg_*)^\N$: a pointwise limit on the dense set of quantum operations extends uniquely to a contraction on $\Calg_*$, and complete positivity and subnormalization are preserved under pointwise norm limits.
	Thus $\Op(\Calg)$ is measurably isomorphic to a Borel subset of a Polish space, and is therefore standard Borel.
\end{rem}

\begin{lem}
	\label{lem:operation_evaluation_measurable}
	The pairing map
	\begin{align*}
		\Op(\Calg) \times \Calg_* &\longrightarrow \Calg_* \\
		(\Phi,\rho) &\longmapsto \Phi(\rho)
	\end{align*}
	is measurable as well.
\end{lem}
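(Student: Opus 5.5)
The plan is to realize the pairing map as a pointwise limit of countably many measurable maps, using the countable dense subset from \cref{rem:operation_space_standard_borel}. Fix a countable dense subset $(\rho_n)_{n\in\N}$ of $\Calg_*$. For each $k\in\N$ and $\rho\in\Calg_*$, let $n_k(\rho)$ be the smallest index $n$ with $\|\rho-\rho_n\|<2^{-k}$. This is well defined by density. The map $\rho\mapsto n_k(\rho)$ is Borel measurable from $\Calg_*$ to $\N$, since its fibre over $n$ is the set of $\rho$ lying in the open ball $B(\rho_n,2^{-k})$ but in none of the balls $B(\rho_m,2^{-k})$ with $m<n$. That set is Borel. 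Then define
\[
	F_k(\Phi,\rho) \coloneqq \Phi(\rho_{n_k(\rho)}).
\]

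The first step is to check that each $F_k$ is measurable on $\Op(\Calg)\times\Calg_*$ with the product $\sigma$-algebra. The domain is the countable disjoint union of the measurable rectangles $\Op(\Calg)\times n_k^{-1}(n)$. On each piece, $F_k$ coincides with $(\Phi,\rho)\mapsto\Phi(\rho_n)$, which factors through the projection onto $\Op(\Calg)$ followed by the evaluation at $\rho_n$. That evaluation is measurable by the definition of the $\sigma$-algebra on $\Op(\Calg)$. A map that is measurable on each piece of a countable measurable partition is measurable, so for a Borel set $B\subseteq\Calg_*$ the preimage $F_k^{-1}(B)$ is a countable union of measurable sets.

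The second step is convergence. Every quantum operation is a contraction, so
\[
	\|F_k(\Phi,\rho)-\Phi(\rho)\|\le\|\rho_{n_k(\rho)}-\rho\|<2^{-k}.
\]
Hence $F_k\to\Phi(\rho)$ pointwise in norm, and in fact uniformly. The pointwise limit of a sequence of measurable maps into a metric space with its Borel $\sigma$-algebra is measurable. To see this, for closed $C$ write the preimage as $\bigcap_j\bigcup_N\bigcap_{k\ge N}F_k^{-1}(U_j)$, where $U_j$ is the open $1/j$-neighbourhood of $C$.

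The only point needing care is the product-$\sigma$-algebra issue. One might worry whether the product of Borel $\sigma$-algebras agrees with the Borel $\sigma$-algebra of a product topology, but this argument never needs that. The partition trick only uses measurable rectangles, so no such identification is required. Separability of $\Calg_*$ is what makes the partition countable, and it is the essential hypothesis here.
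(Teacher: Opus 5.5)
Your proposal is correct and is essentially the paper's argument. The paper cites the Carath\'eodory joint measurability theorem: $\Phi\mapsto\Phi(\rho)$ is measurable by definition of the $\sigma$-algebra, $\rho\mapsto\Phi(\rho)$ is continuous since quantum operations are contractions, and $\Calg_*$ is separable metrizable. Your countable-partition approximation $F_k(\Phi,\rho)=\Phi(\rho_{n_k(\rho)})$ is the standard proof of that theorem written out for this case, except that you use the uniform contraction bound to get uniform convergence, where continuity of each $\rho\mapsto\Phi(\rho)$ would already give the pointwise convergence you need.
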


\begin{proof}
	This is an instance of the Carathéodory joint measurability theorem~\cite[Lemma~4.51]{AliprantisBorder}.
	Indeed, for fixed $\rho\in\Calg_*$, the map $\Phi\mapsto\Phi(\rho)$ is measurable by definition of the measurable structure on $\Op(\Calg)$.
	For fixed $\Phi\in\Op(\Calg)$, the map $\rho\mapsto\Phi(\rho)$ is continuous because $\Phi$ is a contraction.
	Since $\Calg_*$ is separable metrizable, the joint measurability theorem applies.
\end{proof}

The following is Ozawa's notion of instrument~\cite{Ozawa84}, except that we allow an arbitrary von Neumann algebra $\Calg$ with separable predual rather than only $\mathcal B(\mathcal H)$.

\begin{defn}
	\label{defn:ozawa_instrument}
	Let $X$ be a measurable space.
	Then a \newterm{quantum instrument} with values in $X$ is a map
	\[
		\inst : \Sigma_X \times \Calg_* \longrightarrow \Calg_*
	\]
	satisfying the following properties:
	\begin{enumerate}
		\item For every $S\in\Sigma_X$, the map $\inst(S,-):\Calg_*\to \Calg_*$ is a quantum operation.
		\item $\sigma$-additivity: for every pairwise disjoint family $(S_n)_{n \in \N}$ in $\Sigma_X$ and every $\rho \in \Calg_*$,
			\[
				\inst\left( \bigcup_{n \in \N} S_n,\rho \right)
				=
				\sum_{n \in \N} \inst(S_n,\rho)
			\]
			with convergence in the norm of $\Calg_*$.
		\item $\inst(X,-)$ is a channel, that is $\tau(\inst(X,\rho)) = \tau(\rho)$ for all $\rho \in \Calg_*$.
	\end{enumerate}
\end{defn}

\begin{ex}
	\begin{enumerate}
		\item Discrete projection-valued measures give canonical examples of quantum instruments via the standard projection postulate.
			Let $X$ be countable, and let $P_x\in\Calg$ be pairwise orthogonal projections with $\sum_{x\in X}P_x=1$ in the strong operator topology.
			The corresponding PVM is $P(S)\coloneqq\sum_{x\in S}P_x$ for all $S\subseteq X$.
			Under the projection postulate, the associated instrument is
			\[
				\inst(S,\rho)(a)
				\coloneqq
				\sum_{x\in S}\rho(P_xaP_x)
				\qquad
				\forall a \in \Calg.
			\]
			Thus, conditional on observing $x$, the unnormalized state update is $\rho\mapsto P_x\rho P_x$, in the usual Schr\"odinger-picture notation.
			For a coarse-grained event $S$, the instrument sums these updates over $x\in S$, which is exactly what makes $S\mapsto\inst(S,\rho)$ countably additive.
		\item Every POVM with outcomes in a measurable space $X$ becomes a quantum instrument after choosing post-measurement states, provided that the latter depend measurably on the measurement outcome.

			Concretely, let $F:\Sigma_X\to\Calg$ be a POVM, and let $x\mapsto\sigma_x$ be a measurable family of normal states on $\Calg$, meaning that $x\mapsto\sigma_x(a)$ is measurable for every $a\in\Calg$.
			Then the corresponding measure-and-prepare instrument is
			\[
				\inst(S,\rho)(a)
				\coloneqq
				\int_{x\in S} \sigma_x(a)\,\rho(F(\mathrm{d}x)),
				\qquad
				a\in\Calg,
			\]
			where $\rho(F(\mathrm{d}x))$ denotes integration with respect to the scalar measure $S\mapsto\rho(F(S))$.
			This instrument indeed has outcome statistics given by $F$, since
			\[
				\tau(\inst(S,\rho))
				=
				\rho(F(S)) \qquad \forall \rho \in \Calg_*.
			\]
			See also~\cite{CyconHellwig1977}, where these kinds of instruments have been called \emph{nuclear}.
	\end{enumerate}
\end{ex}

We now define
\[
	\instmon_\Calg X
	\coloneqq
	\{ \text{quantum instruments with outcomes in } X \}.
\]
We equip this set with the smallest $\sigma$-algebra which makes the evaluation maps
\begin{align*}
	\instmon_\Calg X &\longrightarrow \Op(\Calg) \\
	\inst &\longmapsto \inst(S,-)
\end{align*}
measurable for all $S \in \Sigma_X$.
By definition of the $\sigma$-algebra on $\Op(\Calg)$, this is equivalently the smallest $\sigma$-algebra which makes the evaluation maps in the form
\begin{align*}
	\instmon_\Calg X &\longrightarrow \Calg_* \\
	\inst &\longmapsto \inst(S,\rho)
\end{align*}
measurable for all $S \in \Sigma_X$ and $\rho \in \Calg_*$.

\begin{rem}
	If $X$ is standard Borel, then $\instmon_\Calg X$ is standard Borel as well, by the following arguments.

	If $X$ is countable, then $\Sigma_X=2^X$.
	Assuming $X = \N$ without loss of generality, an instrument is equivalently a sequence $(\Phi_n)_{n\in\N}$ of operations such that $\sum_n \Phi_n$ converges pointwise in norm and is a channel.
	After restricting to a countable dense subset of $\Calg_*$,
	these are countably many Borel conditions.
	Thus $\instmon_\Calg X$ is a Borel subset of the standard Borel space $\Op(\Calg)^\N$, which makes it standard Borel as well.

	By Kuratowski's theorem, the remaining case to consider is $X=2^\N$.
	Let $\mathcal R$ be the countable Boolean algebra of clopen cylinder sets in $2^\N$.
	Then the $\sigma$-algebra on $\instmon_\Calg(2^\N)$ is generated by the countably many evaluation maps $\inst\mapsto\inst(R,-)$ with $R\in\mathcal R$.
	Indeed, for fixed $\rho\in\Calg_*$, the class of all Borel sets $S\subseteq 2^\N$ for which $\inst\mapsto\inst(S,\rho)$ is measurable with respect to these evaluations is a Dynkin system containing $\mathcal R$, so that the claim follows by the $\pi$-$\lambda$ theorem.
	Hence the map
	\begin{align*}
		\instmon_\Calg(2^\N) &\longrightarrow \Op(\Calg)^{\mathcal R} \\
		\inst &\longmapsto \bigl(\inst(R,-)\bigr)_{R \in \mathcal R}
	\end{align*}
	is injective, again by the same Dynkin-system argument.
	Its image is described by countably many Borel conditions: finite additivity on the countable Boolean algebra $\mathcal R$ and the normalization condition that $\Phi_{2^\N}$ is a channel.
	Finite additivity is enough, since if a countable disjoint union of clopen subsets of $2^\N$ is again clopen, then compactness implies that the union is already finite.
	Thus, after evaluating at any $\rho\in\Calg_*$ and pairing with any $a \in (\Calg_*)^* = \Calg$, the usual Carathéodory extension theorem for premeasures on algebras gives the unique scalar extension from $\mathcal R$ to the Borel sets of $2^\N$.
	Since the continuous linear functionals on $\Calg_*$ separate points, these scalar extensions determine the required vector-valued extension.
	Therefore $\instmon_\Calg(2^\N)$ is a Borel subset of the standard Borel space $\Op(\Calg)^\mathcal R$.

	Finally, every uncountable standard Borel space is Borel isomorphic to $2^\N$ by Kuratowski's theorem~\cite[Theorem~15.6]{KechrisCDST}, while the countable case was handled above.
	Transporting instruments along such a Borel isomorphism gives a measurable isomorphism of the corresponding instrument spaces, so $\instmon_\Calg X$ is standard Borel.
\end{rem}

\begin{lem}
	\label{lem:ozawa_functor}
	The assignment $X \mapsto \instmon_\Calg X$ extends to a functor $\instmon_\Calg : \Meas \to \Meas$.
\end{lem}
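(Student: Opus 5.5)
The plan is to define the action on morphisms by pushforward of instruments along measurable maps, and then verify three things: well-definedness, measurability, and functoriality.

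\textbf{Definition and well-definedness.} For a measurable $f : X \to Y$ and $\inst \in \instmon_\Calg X$, I set
\[
	(f_*\inst)(T,\rho) \coloneqq \inst(f^{-1}(T),\rho) \qquad \forall T \in \Sigma_Y,\ \rho \in \Calg_*.
\]
This makes sense because $f^{-1}(T) \in \Sigma_X$ by measurability of $f$. Each of the three conditions of \Cref{defn:ozawa_instrument} then transfers directly.
\begin{enumerate}
	\item For each $T$, the map $(f_*\inst)(T,-) = \inst(f^{-1}(T),-)$ is a quantum operation.
	\item Preimages preserve pairwise disjointness and countable unions, so $\sigma$-additivity of $f_*\inst$ is exactly $\sigma$-additivity of $\inst$ applied to the family $(f^{-1}(T_n))_n$, with the same norm convergence.
	\item Since $f^{-1}(Y) = X$, the operation $(f_*\inst)(Y,-) = \inst(X,-)$ is a channel.
\end{enumerate}

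\textbf{Measurability of $f_*$.} The $\sigma$-algebra on $\instmon_\Calg Y$ is generated by the evaluation maps $\instb \mapsto \instb(T,-)$ into $\Op(\Calg)$, for $T \in \Sigma_Y$. It therefore suffices to check that each composite
\[
	\inst \longmapsto (f_*\inst)(T,-) = \inst(f^{-1}(T),-)
\]
is measurable. This composite is itself one of the generating evaluation maps on $\instmon_\Calg X$, namely the one for the set $f^{-1}(T) \in \Sigma_X$, so it is measurable by definition. By the universal property of initial $\sigma$-algebras, $f_*$ is then measurable.

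\textbf{Functoriality.} This follows from $(g \circ f)^{-1}(U) = f^{-1}(g^{-1}(U))$, which gives $(g\circ f)_* = g_* \circ f_*$, together with $\id_X^{-1}(S) = S$, which gives $(\id_X)_* = \id$.

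I expect no real obstacle here. The only point needing care is the measurability step, which relies on the $\sigma$-algebra on $\instmon_\Calg X$ being the initial one for the evaluation maps.
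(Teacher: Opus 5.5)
Your proposal is correct and follows the paper's proof. Like the paper, you define $f_*\inst$ via preimages and note that the instrument axioms carry over directly. You get measurability because each composite $\inst \mapsto \inst(f^{-1}(T),-)$ is one of the generating evaluation maps on $\instmon_\Calg X$, and functoriality from functoriality of inverse images.
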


\begin{proof}
	Let $f : X \to Y$ be measurable and let $\inst \in \instmon_\Calg X$.
	Define $f_*\inst \in \instmon_\Calg Y$ by
	\[
		(f_*\inst)(T,\rho) \coloneqq \inst(f^{-1}(T),\rho),
		\qquad T \in \Sigma_Y.
	\]
	The defining properties of a quantum instrument are inherited immediately from those of $\inst$.
	The induced map
	\begin{align*}
		\instmon_\Calg(f) : \instmon_\Calg X &\longrightarrow \instmon_\Calg Y \\
		\inst &\longmapsto f_*\inst
	\end{align*}
	is measurable because, for $T \in \Sigma_Y$, the map
	\[
		\inst \longmapsto 
		(f_*\inst)(T,-)
		=
		\inst(f^{-1}(T),-)
	\]
	is one of the generating evaluation maps on $\instmon_\Calg X$.
	Functoriality is immediate from functoriality of inverse images.
\end{proof}

We now define the unit and multiplication of the quantum instrument monad $\instmon_\Calg$.
The unit $\eta$ has components
\[
	\eta : X \longrightarrow \instmon_\Calg X
\]
sending $x \in X$ to the trivial instrument
\[
	\eta(x)(S,\rho) \coloneqq \delta_{x \in S}\,\rho.
\]
This instrument deterministically returns the outcome $x$ and does nothing to the quantum system.

For the multiplication, we use the \newterm{quantum operation integral} developed in \Cref{app:quantum_operation_integral} in order to define
\begin{equation}
	\label{eq:ozawa_multiplication}
\begin{aligned}
	\mu \: : \: \instmon_\Calg\instmon_\Calg X &\longrightarrow \instmon_\Calg X \\
	\instb & \longmapsto \left( (S, \rho) \mapsto \int_{\inst \in \instmon_\Calg X} \inst(S,-) \left(\instb(\mathrm{d}\inst,\rho)\right) \right).
\end{aligned}
\end{equation}
As before, an element $\instb$ is an instrument whose outcome is itself an instrument $\inst$ on $X$; the composite instrument first produces $\inst$ and then performs $\inst$.
More explicitly, for any input state $\rho$, we obtain a vector measure $\instb(-,\rho)$ on the outcome space $\instmon_\Calg X$ with values in $\Calg_*$, which describes the post-measurement state of the outer instrument $\instb$.
In order to calculate the quantum operation corresponding to the composite instrument conditioned on an outcome in $S$, we integrate the quantum operation $\inst(S,-)$ of the inner instrument $\inst$ against this vector measure.
Since this vector measure is positive and has total mass $\tau(\rho)$, the relevant finite variation assumption holds by \cref{rem:quantum_operation_integral_discussion}\ref{item:quantum_operation_integral_discussion_positive}.

\begin{prop}
	\label{prop:ozawa_unit_multiplication}
	The above maps $\eta$ and $\mu$ are measurable, and $\mu(\instb)$ is a quantum instrument for every $\instb \in \instmon_\Calg\instmon_\Calg X$.
\end{prop}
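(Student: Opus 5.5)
For measurability of $\eta$, it suffices to check the generating evaluations. Fix $S \in \Sigma_X$ and $\rho \in \Calg_*$. The map $x \mapsto \eta(x)(S,\rho) = \delta_{x\in S}\,\rho$ takes only the two values $\rho$ and $0$, and the preimage structure is governed by the measurable set $S$, so it is measurable. It is also immediate that $\eta(x)$ is a quantum instrument: $\delta_{x\in S}\id$ is a quantum operation, $\sigma$-additivity holds trivially, and $\eta(x)(X,-)=\id$ is a channel.

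To show that $\mu(\instb)$ is a quantum instrument, I will rely on the properties of the quantum operation integral from \Cref{app:quantum_operation_integral}. I will assume it is linear in the vector measure, monotone and positivity preserving, and that it satisfies a dominated-convergence or $\sigma$-additivity statement in the integrand. The integrand $\inst \mapsto \inst(S,-)$ is a measurable map $\instmon_\Calg X \to \Op(\Calg)$ by definition of the $\sigma$-algebra, so the integral is defined. The vector measure $\instb(-,\rho)$ is positive of total mass $\tau(\rho)$ by \cref{rem:quantum_operation_integral_discussion}. The proof then proceeds in four steps.

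\begin{enumerate}
	\item \textbf{Linearity.} Linearity of $\rho \mapsto \mu(\instb)(S,\rho)$ follows from linearity of $\rho\mapsto\instb(-,\rho)$ together with linearity of the integral in the measure.
	\item \textbf{Complete positivity.} I will argue via amplification. For $n\times n$ matrices, $\instb\otimes\id_n$ is again positive as a measure, and the integral commutes with $\id_n$. Alternatively, I will approximate by simple functions: for a simple integrand $\sum_i \Phi_i 1_{A_i}$, the integral is $\sum_i \Phi_i(\instb(A_i,\rho))$, a sum of completely positive compositions. I will then pass to the limit using closedness of completely positive maps under pointwise norm limits, as in \cref{rem:operation_space_standard_borel}.
	\item \textbf{Trace.} For the trace, I will use that $\tau$ commutes with the integral. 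This gives
	\[
		\tau\bigl(\mu(\instb)(S,\rho)\bigr)=\int \tau\bigl(\inst(S,\sigma)\bigr)\text{-type expression},
	\]
	which is bounded by $\int \tau(\instb(\mathrm d\inst,\rho)) = \tau(\rho)$. For $S = X$ the integrand is channel-valued, so equality holds.
	\item \textbf{$\sigma$-additivity.} For disjoint $S_n$, the partial sums $\sum_{n\le N}\inst(S_n,-)$ converge pointwise in norm to $\inst(\bigcup S_n,-)$ for each $\inst$, and they are uniformly bounded by contractions. Dominated convergence for the quantum operation integral, with dominating function $1$ against the finite-variation measure, then yields norm convergence of the integrals.
\end{enumerate}

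Measurability of $\mu$ is the main obstacle. I must show that, for fixed $S$ and $\rho$, the map $\instb \mapsto \int \inst(S,-)(\instb(\mathrm d\inst,\rho)) \in \Calg_*$ is measurable. My plan is a monotone-class argument on the integrand.

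\begin{enumerate}
	\item \textbf{Simple integrands.} For simple integrands $\sum_i \Phi_i 1_{A_i}$ with $A_i$ measurable in $\instmon_\Calg X$, the integral equals $\sum_i \Phi_i(\instb(A_i,\rho))$. This is measurable in $\instb$ because $\instb\mapsto\instb(A_i,\rho)$ is a generating evaluation and each $\Phi_i$ is continuous.
	\item \textbf{Approximation.} Since $\Op(\Calg)$ is standard Borel, the measurable integrand $\inst\mapsto\inst(S,-)$ can be approximated pointwise by countably-valued measurable functions. I will do this by embedding into $(\Calg_*)^\N$ via the dense sequence $(\rho_n)$ and discretizing in a way that stays inside $\Op(\Calg)$. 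If staying inside $\Op(\Calg)$ is problematic, I will instead take convex combinations, or work directly with the appendix's construction of the integral through Radon--Nikodym densities.
	\item \textbf{Limit.} Dominated convergence in the integral gives pointwise convergence in $\instb$. Pointwise limits of measurable maps into the metrizable space $\Calg_*$ are measurable, which completes the argument.
\end{enumerate}

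The delicate point is ensuring that the approximation is uniform enough across all $\instb$ simultaneously. Since the approximation is pointwise in $\inst$ and dominated by $1$, dominated convergence applies separately for each $\instb$, which should suffice.
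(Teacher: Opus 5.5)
Your proposal is correct and follows essentially the same route as the paper: the generating evaluation maps for $\eta$; linearity, matrix-level positivity, the trace identity and dominated convergence (\Cref{lem:ozawa_integral_convergence}) for the instrument axioms; and, for measurability of $\mu$, countably valued approximants of $\inst\mapsto\inst(S,-)$ (supplied by \cref{lem:ozawa_integral_countable_approx}\ref{item:countably_valued_approximation}, which stays inside $\Op(\Calg)$ by taking a countable dense subset of $\Op(\Calg)$ itself) followed by a pointwise limit in $\instb$. The only loosely stated step is the trace bound. There, ``$\tau$ commutes with the integral'' alone would also need a.e.\ positivity of the Radon--Nikodym density of $\instb(-,\rho)$, whereas the paper gets the bound directly from $\inst(S,-)\le\inst(X,-)$. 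It does so via \Cref{lem:ozawa_integral_monotone} together with \cref{lem:ozawa_positive_integral}\ref{item:ozawa_positive_integral_trace}, using only the monotonicity you already said you would assume.
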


\begin{proof}
	We first prove measurability of $\eta$.
	For every $S \in \Sigma_X$ and $\rho \in \Calg_*$, the composite
	\[
		\begin{tikzcd}[column sep=3.2pc]
			X \arrow[r, "\eta"] &
			\instmon_\Calg X \arrow[r, "{\inst \mapsto \inst(S,\rho)}"] &
			\Calg_*
		\end{tikzcd}
	\]
	is the map
	\[
		x \longmapsto \eta(x)(S,\rho)=\delta_{x \in S}\rho,
	\]
	which is measurable.
	Thus $\eta$ is measurable by definition of the $\sigma$-algebra on $\instmon_\Calg X$.

	We next show that every $\mu(\instb)$ is a quantum instrument.
	For fixed $S$, the assignment $\rho \mapsto \mu(\instb)(S,\rho)$ is linear by linearity of the vector measure $\rho \mapsto \instb(-,\rho)$ and linearity of the integral in the measure variable.
	Complete positivity of $\mu(\instb)(S,-)$ is checked at matrix levels: if $[\rho_{ij}] \in M_n(\Calg_*)_+$, then
	\[
		R \longmapsto \instb(R,-)^{(n)}([\rho_{ij}])
	\]
	is a positive $M_n(\Calg_*)$-valued measure, and $\inst(S,-)^{(n)}$ is positive for every $\inst$.
	Since $M_n(\Calg_*)=M_n(\Calg)_*$, \cref{lem:ozawa_positive_integral}\ref{item:ozawa_positive_integral_positive} applies with $M_n(\Calg)$ in place of $\Calg$ and shows that the integral is positive in $M_n(\Calg_*)$.
	Thus $\mu(\instb)(S,-)$ is completely positive.

	To prove that $\mu(\instb)(S,-)$ is trace-nonincreasing, fix $\rho \in \Calg_{*,+}$.
	The vector measure $\instb(-,\rho)$ is positive and has total mass of trace $\tau(\rho)$, since $\instb(\instmon_\Calg X,-)$ is trace-preserving.
	Since every $\inst(X,-)$ is trace-preserving, \cref{lem:ozawa_positive_integral}\ref{item:ozawa_positive_integral_trace} applied to $\instb(-,\rho)$ gives
	\[
		\tau(\mu(\instb)(X,\rho))
		=
		\tau(\rho).
	\]
	Thanks to $\inst(S,-)\leq\inst(X,-)$ for every $\inst$, \cref{lem:ozawa_integral_monotone} gives
	\[
		\mu(\instb)(S,\rho)
		\leq
		\mu(\instb)(X,\rho).
	\]
	Applying the positive functional $\tau$ results in
	\[
		\tau(\mu(\instb)(S,\rho))
		\leq
		\tau(\mu(\instb)(X,\rho))
		=
		\tau(\rho).
	\]
	Hence $\mu(\instb)(S,-)$ is trace-nonincreasing, and the case $S=X$ above shows that $\mu(\instb)(X,-)$ is a channel.
	In particular, we conclude that $\mu(\instb)(S,-)$ is indeed a quantum operation.

	To finish the proof that $\mu(\instb)$ is a quantum instrument, it remains to show countable additivity in $S$.
	Thus let $(S_n)_{n \in \N}$ be pairwise disjoint and put $S\coloneqq\bigcup_n S_n$ and $S^{(N)}\coloneqq\bigcup_{n=1}^N S_n$ for all $N$.
	Then for every $\inst$ and $\rho$, we have
	\[
		\inst(S^{(N)},\rho)
		\longrightarrow
		\inst(S,\rho)
	\]
	in norm.
	Therefore by \Cref{lem:ozawa_integral_convergence},
	\[
		\mu(\instb)(S^{(N)},\rho)
		\longrightarrow
		\mu(\instb)(S,\rho)
	\]
	in norm.
	Since finite additivity of $\mu(\instb)$ is immediate from linearity of the integral, this establishes countable additivity.

	Finally, we prove measurability of $\mu$.
	By definition of the $\sigma$-algebra on $\instmon_\Calg X$,
	it is enough to show that for every $S \in \Sigma_X$ and $\rho \in \Calg_*$, the map
	\begin{align*}
		\instmon_\Calg\instmon_\Calg X &\longrightarrow \Calg_* \\
		\instb & \longmapsto \mu(\instb)(S,\rho)
	\end{align*}
	is measurable, i.e.~that the integral in~\eqref{eq:ozawa_multiplication} depends measurably on $\instb$.
	To this end, use \cref{lem:ozawa_integral_countable_approx}\ref{item:countably_valued_approximation} to choose measurable maps with countable image
	\[
		L_n:\instmon_\Calg X \longrightarrow \Op(\Calg)
	\]
	which converge pointwise to $L : \inst \mapsto \inst(S,-)$ in the strong-operator topology.
	If $L_n(\inst)=\Phi_i$ for $\inst\in R_i$, with $(R_i)_{i \in \N}$ forming a countable measurable partition of $\instmon_\Calg X$, then the map
	\[
		\instb \longmapsto
		\int_{\inst \in \instmon_\Calg X} L_n(\inst) \left(\instb(\mathrm{d}\inst,\rho)\right)
		=
		\sum_i \Phi_i\bigl(\instb(R_i,\rho)\bigr)
	\]
	is measurable by the definition of the $\sigma$-algebra on $\instmon_\Calg\instmon_\Calg X$, since the sum is the pointwise norm limit of its finite partial sums.
	Therefore the desired map is the pointwise norm limit of these measurable maps by \cref{lem:ozawa_integral_convergence}.
	Thus $\mu$ is measurable.
\end{proof}

\begin{thm}
	\label{thm:ozawa_monad}
	Let $\Calg$ be a von Neumann algebra that is a countable direct sum of type I factors with separable predual.
	Then the functor $\instmon_\Calg$, with unit $\eta$ and multiplication $\mu$ defined above, is a monad on $\Meas$.
\end{thm}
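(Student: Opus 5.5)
Following the proofs of \Cref{thm:classical_state_monad} and its finitary analogue, I would embed $\instmon_\Calg$ into a continuation-type monad, now on $\Meas$. A continuation monad with measurable structure is awkward, so I will use the same comparison map only as a \emph{set-level} device, working with the underlying sets and functions. Measurability of $\eta$, $\mu$ and of $f_*$ is already settled by \cref{prop:ozawa_unit_multiplication} and \cref{lem:ozawa_functor}. What remains is that $\eta,\mu$ are natural and satisfy the three monad equations, and these are equalities of functions, so they can be checked on underlying sets.

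Put $R\coloneqq\Op(\Calg)$. For each measurable space $X$, I would define $\theta:\instmon_\Calg X\to\Cont_R X$ by sending $\inst$ to the map that takes a \emph{bounded measurable} continuation $L:X\to R$ to the operation
\[
	\rho\longmapsto\int_{x\in X}L(x)\bigl(\inst(\mathrm{d}x,\rho)\bigr),
\]
using the quantum operation integral of \Cref{app:quantum_operation_integral}. To make this a genuine $\Cont_R X$, I restrict the continuation variable to measurable $L$, i.e.\ replace $\Cont_R X$ by maps $\Meas(X,R)\to R$. This is still a monad on underlying sets with the usual formulas, because $\Cont f$ precomposes with measurable $f$, and $\nu$ evaluates at $\ev_L\circ\theta$.

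I would then check four facts, in this order.

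\begin{enumerate}
	\item \textbf{Naturality.} I need $\theta(f_*\inst)(L)=\theta(\inst)(L\circ f)$. This is a change-of-variables formula for the integral. It holds for countably valued $L$ by direct computation, and extends to general $L$ via the approximation lemma \cref{lem:ozawa_integral_countable_approx} together with the convergence lemma \cref{lem:ozawa_integral_convergence}.
	\item \textbf{Injectivity.} Take $L=1_S\,\id$. Then $\theta(\inst)(L)=\inst(S,-)$.
	\item \textbf{Unit.} $\theta(\eta(x))(L)=L(x)$, since integrating against the Dirac vector measure $\delta_x\rho$ is evaluation.
	\item \textbf{Multiplication.} I need
	\[
		\int_x L(x)\bigl(\mu(\instb)(\mathrm{d}x,\rho)\bigr)=\int_{\inst}\Bigl[\int_x L(x)\bigl(\inst(\mathrm{d}x,-)\bigr)\Bigr]\bigl(\instb(\mathrm{d}\inst,\rho)\bigr).
	\]
	For this, $\ev_L\circ\theta$ must be measurable in $\inst$. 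I would prove that by approximating $L$ with countably valued maps, which reduces it to measurability of $\inst\mapsto\sum_i\Phi_i(\inst(S_i,-))$.
\end{enumerate}

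Given these four facts, the monad laws and naturality of $\eta,\mu$ transfer from $\Cont$ exactly as in the finite case.

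The main obstacle is the multiplication identity, which is a Fubini-type statement for the quantum operation integral. My plan is as follows.
\begin{enumerate}
	\item Prove it first for $L=1_S\Phi$ with $\Phi\in R$. The left side is then $\Phi(\mu(\instb)(S,\rho))$ and the right side is $\int_\inst\Phi\circ\inst(S,-)\,(\instb(\mathrm{d}\inst,\rho))$. These agree because a fixed bounded operator commutes with the integral, which is clear on countably valued approximants and passes to the limit by continuity.
	\item Extend by linearity to countably valued $L$. Convergence of the countable sums is controlled by positivity and total mass $\tau(\rho)$ of the measures involved.
	\item Pass to general measurable $L$ by choosing countably valued $L_n\to L$ strongly, using \cref{lem:ozawa_integral_countable_approx}. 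The left side converges by \cref{lem:ozawa_integral_convergence} applied to $\mu(\instb)(-,\rho)$. On the right side, the inner integrals converge pointwise in $\inst$ by the same lemma, and the outer integrals converge by one more application of \cref{lem:ozawa_integral_convergence}.
\end{enumerate}

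Throughout, I rely on the integral being dominated-convergence-friendly for positive vector measures, which is where the hypothesis that $\Calg_*$ has the Radon--Nikodym property (the type I assumption) enters.
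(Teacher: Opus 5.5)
Your proposal is correct and is essentially the paper's proof. It uses the same comparison map $\theta$ into $\Cont_R$ with measurable continuations and the same four checks: change of variables, injectivity via $L_S$, the Dirac case, and the Fubini identity. Your $1_S\Phi$ / countably valued / approximation argument for that identity is the paper's Fubini lemma with kernel $\kappa_\inst=\inst$. Your ``set-level'' framing does not actually avoid measurable structure: the measurability of $\ev_L\circ\theta$ you prove is exactly the paper's measurability of $\theta$ into the product $\sigma$-algebra on $\Cont X$. You should also record, as the paper does, that $\theta(\inst)(L)$ really is a quantum operation.
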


\begin{proof}
	By \Cref{prop:ozawa_unit_multiplication}, the unit and multiplication are measurable and have the required codomains.
	It remains to prove naturality and the monad laws.

	Put $R\coloneqq\Op(\Calg)$ and let $\Cont$ denote the continuation monad $\Cont_R$ on $\Meas$ from \Cref{app:continuation_monads}.
	For the rest of the proof, we write $\instmon$ for $\instmon_\Calg$.
	For every measurable space $X$, define
	\begin{align*}
		\theta \: : \: \instmon X &\longrightarrow \Cont\, X \\
		\inst &\longmapsto \left( L\mapsto \left[\rho\mapsto \int_{x\in X} L(x) \left(\inst(\mathrm{d}x,\rho)\right)\right]\right),
	\end{align*}
	where $L:X\to R$ ranges over all measurable maps.
	For fixed $\inst$ and $L$, the value $\theta(\inst)(L)$ is a quantum operation by definition of the quantum operation integral.
	Complete positivity follows by applying \cref{lem:ozawa_positive_integral}\ref{item:ozawa_positive_integral_positive} at matrix levels, and trace-nonincrease follows first for $L$ with countable image from the formula in \cref{lem:ozawa_integral_countable_approx}\ref{item:countably_valued_integral}, and then in general by approximation and \Cref{lem:ozawa_integral_convergence}.

	We show that the map $\theta$ is measurable.
	Since $\Cont\, X$ carries the product $\sigma$-algebra, it is enough to fix $L:X\to R$ and show that $\inst\mapsto\theta(\inst)(L)$ is measurable as a map $\instmon X\to R$.
	Since the measurable structure on $R=\Op(\Calg)$ is generated by evaluations on $\Calg_*$, it is enough to evaluate at any $\rho\in\Calg_*$.
	The resulting map is measurable $L$ with countable image by \cref{lem:ozawa_integral_countable_approx}\ref{item:countably_valued_integral} and the defining evaluation maps on $\instmon X$; the general case follows by approximation and \Cref{lem:ozawa_integral_convergence}.

	We now prove the same claims as in \Cref{thm:classical_state_monad}, with sums replaced by quantum operation integrals.
	First, $\theta$ is natural.
	For $f:X\to Y$, this means that the square
	\[
		\begin{tikzcd}
			\instmon X \arrow[r, "\theta"] \arrow[d, "f_*"'] & \Cont\, X \arrow[d, "\Cont\, f"] \\
			\instmon Y \arrow[r, "\theta"] & \Cont\, Y
		\end{tikzcd}
	\]
	commutes.
	Indeed, for $\inst\in\instmon X$, $L:Y\to R$, and $\rho\in\Calg_*$, the change-of-variables formula gives
	\begin{align*}
		\theta(f_*\inst)(L)(\rho)
		&=
		\int_{y\in Y} L(y) \left((f_*\inst)(\mathrm{d}y,\rho)\right) \\
		&=
		\int_{x\in X} L(f(x)) \left(\inst(\mathrm{d}x,\rho)\right) \\
		&=
		\theta(\inst)(L\circ f)(\rho)
		=
		(\Cont\, f)(\theta(\inst))(L)(\rho).
	\end{align*}

	Next, $\theta$ is injective.
	For $S\in\Sigma_X$, let $L_S:X\to R$ be the measurable map
	\[
		L_S(x)\coloneqq \delta_{x\in S}\id_{\Calg_*}.
	\]
	Then for every $\rho\in\Calg_*$,
	\[
		\theta(\inst)(L_S)(\rho)
		=
		\int_{x\in X} L_S(x) \left(\inst(\mathrm{d}x,\rho)\right)
		=
		\inst(S,\rho).
	\]
	Hence $\theta(\inst)$ determines $\inst(S,\rho)$ for every $S$ and $\rho$, and therefore determines $\inst$.

	The unit compatibility diagram
	\[
		\begin{tikzcd}
			& X \arrow[dl, "\eta"'] \arrow[dr, "\eps"] & \\
			\instmon X \arrow[rr, "\theta"'] && \Cont\,X
		\end{tikzcd}
	\]
	commutes, since for $L:X\to R$ and $\rho\in\Calg_*$,
	\[
		\theta(\eta(x))(L)(\rho)
		=
		\int_{y\in X} L(y) \left(\eta(x)(\mathrm{d}y,\rho)\right)
		=
		L(x) \left(\rho\right)
		=
		\eps(x)(L)(\rho).
	\]
	Here, the second equality is the Dirac case of the quantum operation integral, as in \Cref{lem:quantum_operation_integral_dirac}.

	Finally, the multiplication compatibility diagram
	\[
		\begin{tikzcd}[column sep=large]
			\instmon\instmon X \arrow[r, "\theta"] \arrow[d, "\mu"'] & \Cont\,\instmon X \arrow[r, "\Cont\,\theta"] & \Cont\,\Cont\,X \arrow[d, "\nu"] \\
			\instmon X \arrow[rr, "\theta"'] && \Cont\,X
		\end{tikzcd}
	\]
	commutes.
	To see this, let $\instb\in\instmon\instmon X$ and measurable $L:X\to R$ be given and evaluate at $\rho\in\Calg_*$.
	For the lower-left path, we apply \Cref{lem:quantum_operation_integral_fubini} to the kernel $\kappa_\inst(S,\sigma)\coloneqq\inst(S,\sigma)$ and the vector measure $\mfrak=\instb(-,\rho)$.
	The hypotheses of the lemma then are exactly the quantum instrument axioms, together with the defining evaluation maps on $\instmon X$.
	This gives
	\begin{align*}
		(\theta\circ\mu)(\instb)(L)(\rho)
		&=
		\int_{x\in X} L(x) \left(\mu(\instb)(\mathrm{d}x,\rho)\right) \\
		&=
		\int_{\inst\in\instmon X}
		\left(\int_{x\in X} L(x) \left(\inst(\mathrm{d}x,-)\right)\right)
		\left(\instb(\mathrm{d}\inst,\rho)\right) \\
		&=
		\int_{\inst\in\instmon X}
		\theta(\inst)(L) \left(\instb(\mathrm{d}\inst,\rho)\right).
	\end{align*}
	For the upper-right path, with $\ev_L:\Cont\, X\to R$ denoting evaluation at $L$, we get
	\begin{align*}
		(\nu \circ (\Cont\,\theta) \circ \theta)(\instb)(L)(\rho)
		&=
		(\Cont\,\theta)(\theta(\instb))(\ev_L)(\rho) \\
		&=
		\theta(\instb)(\ev_L \circ \theta)(\rho) \\
		&=
		\int_{\inst\in\instmon X}
		(\ev_L\circ\,\theta)(\inst) \left(\instb(\mathrm{d}\inst,\rho)\right) \\
		&=
		\int_{\inst\in\instmon X}
		\theta(\inst)(L) \left(\instb(\mathrm{d}\inst,\rho)\right).
	\end{align*}

	Overall, we have that $\theta$ is natural, injective, and compatible with unit and multiplication.
	The naturality of $\eta$ and $\mu$ and the monad laws for $\instmon$ therefore follow from the corresponding facts for $\Cont$.
\end{proof}

The following strength is the direct analogue of the usual strength of probability monads, and in particular of the Giry monad, for which $(x,p)$ is sent to the product measure $\delta_x\otimes p$~\cite[Section~1]{FritzPerrone}.
But now, the probability measure is replaced by a quantum instrument, and pairing with the deterministic value $x$ means evaluating the original instrument on the section $S_x$ for a given measurable subset $S\subseteq X\times Y$ as follows.

\begin{prop}
	\label{prop:ozawa_strength}
	With respect to the cartesian monoidal structure on $\Meas$, the monad $\instmon_\Calg$ is strong,
	with strength given by the measurable maps
	\begin{align*}
		\st \: : \: X\times \instmon_\Calg Y & \longrightarrow \instmon_\Calg(X\times Y) \\
		(x,\inst) & \longmapsto \left( (S,\rho) \mapsto \inst(S_x,\rho) \right),
	\end{align*}
	where
	\[
		S_x\coloneqq \{y\in Y\mid (x,y)\in S\}.
	\]
\end{prop}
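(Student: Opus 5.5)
The plan is to verify, in order: that $\st(x,\inst)$ is a quantum instrument, that $\st$ is jointly measurable on $X\times\instmon_\Calg Y$, that it is natural in $X$ and $Y$, and finally the four strength axioms (unit, multiplication, and the two coherence laws with the unitor and associator). Well-definedness is immediate. Sections $S\mapsto S_x$ preserve complements and countable disjoint unions and take measurable sets to measurable sets, so $S\mapsto\inst(S_x,\rho)$ inherits $\sigma$-additivity, and $(X\times Y)_x=Y$ gives the channel condition. Equivalently, $\st(x,\inst)=(\iota_x)_*\inst$ with $\iota_x:y\mapsto(x,y)$ measurable, so the formula is the transport of~\eqref{strength_formula} and \Cref{lem:ozawa_functor} applies. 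Naturality in $X$ and in $Y$ follows from $(f\times g)^{-1}(S)_x=g^{-1}(S_{f(x)})$.

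The main obstacle will be joint measurability of $\st$. By definition of the $\sigma$-algebra on $\instmon_\Calg(X\times Y)$, we must show that $(x,\inst)\mapsto\inst(S_x,\rho)$ is measurable for each measurable $S\subseteq X\times Y$ and $\rho\in\Calg_*$. I would use a Dynkin-system argument. Consider the class $\mathcal D$ of those $S$ for which this holds for all $\rho$.
\begin{itemize}
\item For rectangles $S=A\times B$, the map is $(x,\inst)\mapsto\delta_{x\in A}\,\inst(B,\rho)$. This is a product of a measurable indicator and a generating evaluation map, hence measurable.
\item $\mathcal D$ is closed under complements, since $\inst((S^c)_x,\rho)=\inst(Y,\rho)-\inst(S_x,\rho)$.
\item $\mathcal D$ is closed under countable disjoint unions, since $\inst((\bigcup_n S_n)_x,\rho)=\sum_n\inst((S_n)_x,\rho)$. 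This is a pointwise norm limit of measurable $\Calg_*$-valued maps, which is measurable because $\Calg_*$ is separable metrizable.
\end{itemize}
Since rectangles form a $\pi$-system generating the product $\sigma$-algebra, the $\pi$-$\lambda$ theorem gives that $\mathcal D$ is everything.

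For the strength axioms I would compute directly on sets, since each side is determined by its values on measurable $S$ and $\rho$.
\begin{itemize}
\item Unit law: $\st(x,\eta(y))(S,\rho)=\delta_{y\in S_x}\rho=\delta_{(x,y)\in S}\rho$.
\item Unitor compatibility: this is $\{\ast\}\times Y\cong Y$ with trivial sections.
\item Associativity compatibility: this reduces to the identity $S_{(x,x')}=(S_x)_{x'}$ on iterated sections.
\end{itemize}

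The multiplication law is the remaining nontrivial step. It asks that $\mu(\st(x,\instmon(\st(x,-))(\instb)))=\st(x,\mu(\instb))$. For this I would either use naturality of $\mu$ together with $\st(x,-)=(\iota_x)_*$, or argue via the embedding $\theta$ into $\Cont$. In the first approach, both sides equal $(\iota_x)_*\mu(\instb)$, because $\st(x,-)\circ\instmon$ applied to $\st(x,-)$ is $\instmon(\iota_x)_*$ composed appropriately and $\mu$ is natural by \Cref{thm:ozawa_monad}. In the second approach, one applies the change-of-variables formula for the quantum operation integral exactly as in the naturality proof of $\theta$. Either way, the pointwise-in-$x$ identities reduce to naturality of $\mu$, so no new integral estimates are needed beyond the measurability step above.
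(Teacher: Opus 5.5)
Your proposal is correct. For well-definedness, for measurability (the Dynkin-system argument over rectangles), for naturality via $((f\times g)^{-1}S)_x=g^{-1}(S_{f(x)})$, and for the unitor and associator coherences, you follow the paper's proof essentially step by step.

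The one genuine difference is the compatibility of $\st$ with $\eta$ and $\mu$. The paper does not verify these laws. It invokes McDermott and Uustalu's result that they hold automatically once the cartesian action of $\Meas$ on itself is well-pointed. You instead check them by hand, using $\st(x,-)=\instmon(\iota_x)$ together with naturality of $\mu$, which \Cref{thm:ozawa_monad} obtains through the embedding $\theta$. Your argument unpacks that citation for this case. Morphisms in $\Meas$ are functions, so equality can be checked pointwise in $x$, and at fixed $x$ the strength is just post-processing along $\iota_x$. Your route is self-contained and shows why no further integral estimates are needed; the paper's is shorter.

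One slip to fix: the multiplication law as you wrote it, $\mu(\st(x,\instmon(\st(x,-))(\instb)))$, is ill-typed. The map $\mu$ cannot be applied to an element of $\instmon(X\times\instmon(X\times Y))$. The correct left-hand side is $\mu\bigl(\instmon(\st)(\st(x,\instb))\bigr)$. Write $\st(x,\instb)=\instmon(\iota_x)(\instb)$, where $\iota_x:\instmon Y\to X\times\instmon Y$ is the analogous inclusion. Functoriality then gives $\instmon(\st)\circ\instmon(\iota_x)=\instmon(\st(x,-))=\instmon\instmon(\iota_x)$, with the last $\iota_x$ now being $Y\to X\times Y$. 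Naturality of $\mu$ then yields $\instmon(\iota_x)(\mu(\instb))=\st(x,\mu(\instb))$, as you claim.
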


\begin{proof}
	For fixed $x$ and $\inst$, the assignment $(S,\rho)\mapsto \inst(S_x,\rho)$ is a quantum instrument.
	Indeed, every section $S_x$ of a measurable set in a product space $X \times Y$ is measurable, taking sections preserves complements and countable disjoint unions, and $(X\times Y)_x=Y$.

	We next prove measurability of $\st$.
	For fixed $\rho\in\Calg_*$, let $\mathcal D_\rho$ be the class of all $S\in\Sigma_{X\times Y}$ for which
	\[
		(x,\inst)\longmapsto \inst(S_x,\rho)
	\]
	is measurable as a map $X\times\instmon_\Calg Y\to\Calg_*$.
	This class is a Dynkin system.
	Indeed, complements use $\inst(Y\setminus S_x,\rho)=\inst(Y,\rho)-\inst(S_x,\rho)$, and countable disjoint unions use countable additivity of $\inst$ together with the fact that pointwise norm limits of measurable $\Calg_*$-valued maps are measurable.
	Moreover $\mathcal D_\rho$ contains all measurable rectangles $A\times B$, because
	\[
		\inst((A\times B)_x,\rho)
		=
		\delta_{x\in A}\inst(B,\rho).
	\]
	Hence by the $\pi$-$\lambda$ theorem, we have $\mathcal D_\rho=\Sigma_{X\times Y}$.
	This is exactly the measurability required by the defining $\sigma$-algebra on $\instmon_\Calg(X\times Y)$.

	Naturality in both variables means that, for measurable $f:X\to X'$ and $g:Y\to Y'$, the square
	\[
		\begin{tikzcd}
			X\times\instmon_\Calg Y \arrow[r, "\st"] \arrow[d, "f\times\instmon_\Calg g"'] & \instmon_\Calg(X\times Y) \arrow[d, "\instmon_\Calg(f\times g)"] \\
			X'\times\instmon_\Calg Y' \arrow[r, "\st"'] & \instmon_\Calg(X'\times Y')
		\end{tikzcd}
	\]
	commutes.
	This is immediate from taking sections:
	for measurable $f:X\to X'$ and $g:Y\to Y'$ and $S\in\Sigma_{X'\times Y'}$, we get
	\[
		((f\times g)^{-1}S)_x
		=
		g^{-1}(S_{f(x)}),
	\]
	from which the required naturality condition follows straightforwardly.

	The monoidal unit coherence diagram is
	\[
		\begin{tikzcd}
			1\times\instmon_\Calg Y \arrow[rr, "\st"] \arrow[dr, "\lambda"'] && \instmon_\Calg(1\times Y) \arrow[dl, "\instmon_\Calg\lambda"] \\
			& \instmon_\Calg Y
		\end{tikzcd}
	\]
	where $\lambda$ denotes the canonical left unitor.
	This diagram commutes because, under the canonical identifications $1\times Y\cong Y$, the unit coherence says that $\st(*,\inst)$ is identified with $\inst$, which follows from $S_*=S$.
	The associativity coherence diagram is
	\[
		\begin{tikzcd}[column sep=large]
			(X\times Y)\times\instmon_\Calg Z \arrow[r, "\st"] \arrow[d, "\alpha"'] & \instmon_\Calg((X\times Y)\times Z) \arrow[rr, "\instmon_\Calg\alpha"] && \instmon_\Calg(X\times(Y\times Z)) \\
			X\times(Y\times\instmon_\Calg Z) \arrow[r, "\id_X\times\st"'] & X\times\instmon_\Calg(Y\times Z) \arrow[rr, "\st"'] && \instmon_\Calg(X\times(Y\times Z)) \arrow[u, equal]
		\end{tikzcd}
	\]
	where $\alpha$ denotes the canonical associator.
	Both paths send $(x,y,\inst)$ to the instrument
	\[
		(S,\rho)\longmapsto \inst(S_{(x,y)},\rho).
	\]

	It remains to know that this strength of the underlying functor is compatible with the monad unit and multiplication.
	This is implied by a result of McDermott and Uustalu~\cite[Proposition~4.7]{McDermottUustalu}, which shows that the compatibility with the unit and multiplication hold automatically, provided that the cartesian action of $\Meas$ on itself is well-pointed in their sense. 
	This means that measurable maps $W \times X\to Y$ are determined by their restrictions along the points $w:1\to W$, which is indeed the case.
\end{proof}

\appendix

\section{Continuation monads on \texorpdfstring{$\Set$}{Set} and \texorpdfstring{$\Meas$}{Meas}}
\label{app:continuation_monads}

We recall two closely related continuation monads, one on $\Set$ and one on $\Meas$.
We start with the case of $\Meas$, since it makes a clearer distinction between the two categories that are involved in the construction.
Fix some $R \in \Meas$, thought of as a set of return values.
Then the universal property of products (or powers) in $\Meas$ gives bijections
\begin{equation}
	\label{eq:continuation_adjunction}
	\Meas(X,R^A)
	\,\cong\,
	\Set(A,\Meas(X,R))
\end{equation}
natural in $X \in \Meas$ and $A \in \Set$.
Thus we are dealing with a contravariant adjunction between $\Meas$ and $\Set$.
Its induced monad on $\Meas$ is the continuation monad, which has underlying endofunctor
\[
	\Cont_R X
	\coloneqq
	R^{\Meas(X,R)},
\]
whose action on morphisms $f:X\to Y$ is given by
\[
	\Cont_R(f)(T)(L)
	=
	T(L\circ f)
\]
for all $T\in\Cont_R X$ and continuations $L \in \Meas(Y,R)$.
Both of these are easily derived from~\eqref{eq:continuation_adjunction} by the standard construction of a monad from an adjunction.
The unit has components
\begin{align*}
	\eps \: : \: X & \longrightarrow \Cont_R X \\
	x & \longmapsto (L \mapsto L(x)),
\end{align*}
and the multiplication is given by
\begin{align*}
	\nu \: : \: \Cont_R\Cont_R X & \longrightarrow \Cont_R X \\
	\mathcal T & \longmapsto (L \mapsto \mathcal T(\ev_L)),
\end{align*}
where $\ev_L:\Cont_R X\to R$ is the coordinate projection to the continuation $L$.

If $R$ is a mere set, then the same construction gives a continuation monad on $\Set$ from the contravariant adjunction
\[
	\Set(X,R^A)
	\,\cong\,
	\Set(A,\Set(X,R))
\]
with underlying endofunctor
\[
	\Cont_R X
	\coloneqq
	R^{\Set(X,R)},
\]
and with unit and multiplication given by the same formulas as above, but also with $\Meas$ replaced by $\Set$.

\section{The quantum operation integral}
\label{app:quantum_operation_integral}

In this appendix, we develop the integral of quantum operations against state-valued vector measures.
This is the notion of integral that appears in the definition of the multiplication of the quantum instrument monad in \Cref{sec:ozawa_monad}.
Throughout, $X$ and $Y$ are measurable spaces and
\[
	\mfrak : \Sigma_X \longrightarrow \Calg_*
\]
is a countably additive vector measure of finite variation.
Moreover, we assume that $\Calg$ is a von Neumann algebra such that the predual $\Calg_*$ is separable and has the Radon--Nikodym property (\cref{rem:radon_nikodym_property}).
We use the standard Radon--Nikodym theorem for Banach-space-valued vector measures~\cite{DiestelUhl}.
This shows that there is a measurable function
\[
	D_\mfrak : X \longrightarrow \Calg_*
\]
which is Bochner integrable with respect to the variation measure $|\mfrak|$ and such that
\begin{equation}
	\label{vector_measure_radon_nikodym}
	\mfrak(R) = \int_{x \in R} D_\mfrak(x)\,|\mfrak|(\mathrm{d}x)
	\qquad
	\forall R \in \Sigma_X.
\end{equation}

\begin{defn}
	If $L : X \to \Op(\Calg)$ is measurable, then its \newterm{quantum operation integral} against $\mfrak$ is
	\[
		\int_{x \in X} L(x) \left(\mfrak(\mathrm{d}x)\right)
		\coloneqq
		\int_{x \in X} L(x)\bigl(D_\mfrak(x)\bigr)\,|\mfrak|(\mathrm{d}x),
	\]
	where the integral on the right is a Bochner integral in $\Calg_*$.
\end{defn}

The notation on the left is merely suggestive: $\mfrak(\mathrm{d}x)$ is not a pointwise value, but indicates that the quantum operation $L(x)$ acts on the vector-measure increment, as made precise by the Radon--Nikodym formula on the right.
The integrand is Bochner measurable by \Cref{lem:operation_evaluation_measurable}, and it is dominated in norm by $\|D_\mfrak(x)\|$, since quantum operations are contractions.
Thus the integral is well-defined.
More generally, one may use any finite positive measure $\nu$ dominating $|\mfrak|$ and the corresponding density $D_\nu$ of $\mfrak$ with respect to $\nu$.
This gives the same value and makes linearity in the vector measure variable immediate by passing to a common dominating measure.
Linearity in the integrand $L$ is also immediate from linearity of the Bochner integral.

A simple special case of the quantum operation integral is when the vector measure is a Dirac measure:

\begin{lem}
	\label{lem:quantum_operation_integral_dirac}
	Let $x\in X$ and $\rho\in\Calg_*$, and consider the Dirac vector measure
	\[
		\delta_x^\rho(S)
		\coloneqq
		\delta_{x\in S}\rho.
	\]
	Then for every measurable $L:X\to\Op(\Calg)$, we have
	\[
		\int_{y\in X} L(y)\left(\delta_x^\rho(\mathrm{d}y)\right)
		=
		L(x)(\rho).
	\]
\end{lem}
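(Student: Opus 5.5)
The plan is to compute the variation measure and Radon--Nikodym density of $\delta_x^\rho$ explicitly and then evaluate the defining Bochner integral. If $\rho=0$, both sides vanish, since the vector measure is zero, the integral of anything against it is zero, and $L(x)(0)=0$ by linearity. So assume $\rho\neq 0$.

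First, the variation measure is $|\delta_x^\rho| = \|\rho\|\,\delta_x$, where $\delta_x$ is the scalar Dirac measure. Indeed, for any finite measurable partition of $S$, at most one piece contains $x$, so the sum of norms is $\delta_{x\in S}\|\rho\|$. Taking the supremum over partitions gives exactly this value.

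Next, I would take the density to be the constant function $D(y)\coloneqq \rho/\|\rho\|$. This is clearly measurable and Bochner integrable, and it satisfies
\[
	\int_{y\in S} D(y)\,|\delta_x^\rho|(\mathrm{d}y) = \delta_{x\in S}\,\|\rho\|\,\frac{\rho}{\|\rho\|} = \delta_x^\rho(S),
\]
so it is a valid Radon--Nikodym derivative as in~\eqref{vector_measure_radon_nikodym}. Any other choice of density agrees with this one $|\delta_x^\rho|$-almost everywhere, and in particular at $x$ because $\{x\}$ need not be measurable but every measurable set of full $\delta_x$-measure contains $x$. In any case the integral does not depend on the choice, as remarked after the definition.

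Finally, the defining integral is
\[
	\int_{y\in X} L(y)\bigl(D(y)\bigr)\,\|\rho\|\,\delta_x(\mathrm{d}y) = \|\rho\|\,L(x)\bigl(\rho/\|\rho\|\bigr) = L(x)(\rho),
\]
using that the Bochner integral against a Dirac measure is evaluation at the point, together with linearity of $L(x)$.

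The only mildly delicate point is justifying that the Bochner integral of the measurable function $y\mapsto L(y)(D(y))$ against $\|\rho\|\delta_x$ equals its value at $x$. I would argue this directly: the integrand agrees $\delta_x$-almost everywhere with the constant function $L(x)(D(x))$, because the set where they agree is measurable (by \cref{lem:operation_evaluation_measurable}) and contains $x$. Hence the integral of the integrand equals the integral of that constant, which is $\|\rho\|\,L(x)(D(x))$.
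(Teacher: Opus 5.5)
Your proposal is correct and follows essentially the same route as the paper: treat $\rho=0$ separately, identify $|\delta_x^\rho|=\|\rho\|\,\delta_x$ with density $\rho/\|\rho\|$, and evaluate the defining Bochner integral. Your extra care about singletons that may not be measurable fills in a detail the paper leaves implicit: you use that every measurable set of full $\delta_x$-measure contains $x$.
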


\begin{proof}
	If $\rho=0$, this is immediate.
	Otherwise $|\delta_x^\rho|=\|\rho\|\delta_x$, and the Radon--Nikodym density is equal to $\rho/\|\rho\|$ at $x$.
	Hence the definition of the integral gives
	\[
		\int_{y\in X} L(y)\left(\delta_x^\rho(\mathrm{d}y)\right)
		=
		\int_{y\in X} L(y)\left(D_{\delta_x^\rho}(y)\right)\,|\delta_x^\rho|(\mathrm{d}y)
		=
		L(x)(\rho).
		\qedhere
	\]
\end{proof}

\begin{rem}
	\label{rem:quantum_operation_integral_discussion}
	We make some comments about the quantum operation integral in general.
	\begin{enumerate}
		\item\label{item:quantum_operation_integral_discussion_positive}
			If $\mfrak$ is positive, then the variation measure is simply
			\[
				|\mfrak|=\tau\circ\mfrak,
			\]
			since $\|\rho\|=\tau(\rho)$ for positive $\rho\in\Calg_*$.
			However, positivity alone does not imply the existence of a Bochner density $D_\mfrak:X\to\Calg_*$,
			and in particular the Radon--Nikodym hypothesis cannot be dropped merely by assuming that $\mfrak$ is positive.

			This is illustrated by the following standard example~\cite[Example~III.1.2]{DiestelUhl}.
			Take $\Calg=L^\infty([0,1])$, so that $\Calg_*=L^1([0,1])$, and define the positive vector measure on $X = [0,1]$ given by
			\[
				\mfrak(E)\coloneqq\chi_E.
			\]
			Then $|\mfrak|$ is the Lebesgue measure.
			If a Bochner density $D_\mfrak$ existed, then for every $g\in L^\infty([0,1])$ we would have for every $E \in \Sigma_X$ that
			\[
				\int_{x \in E} \langle g, D_\mfrak(x)\rangle\,\mathrm{d}x
				=
				\left\langle g, \int_{x \in E} D_\mfrak(x)\,\mathrm{d}x \right\rangle
				= 
				\langle g, \mfrak(E)\rangle
				=
				\int_{t \in E} g(t)\,\mathrm{d}t,
			\]
			where $\langle -, - \rangle$ denotes the dual pairing between $L^\infty([0,1])$ and $L^1([0,1])$.
			This would imply $\langle D_\mfrak(x),g\rangle=g(x)$ for almost every $x$, which would make $D_\mfrak(x)$ represent point evaluation at $x$.
			This is impossible because point evaluation is not a normal functional on $L^\infty([0,1])$.
			Thus positivity gives scalar Radon--Nikodym derivatives after testing against observables, but these need not assemble into a $\Calg_*$-valued density.
		\item
			Cycon and Hellwig~\cite[Appendix]{CyconHellwig1977} define a related notion of integral in the setting of generalized probabilistic theories,
			where a generalized probabilistic theory is given by a base norm space of states $V$ with dual effect space $V'$.
			They integrate bounded measurable\footnote{Their actual measurability condition is a bit more permissive and involves $\alpha$, similar to how Lebesgue measurability is a bit more permissive than Borel measurability but is relative to the Lebesgue measure.}
			families of states $\varphi : X \to V$ against a POVM $\alpha:\Sigma_X\to V'$, resulting in
			\begin{align*}
				\int_{x\in X}\varphi(x)\,\langle\alpha(\mathrm{d}x),-\rangle \: : \: V \longrightarrow V.
			\end{align*}
			For a simple function $\varphi=\sum_i v_i\chi_{E_i}$, this integral is defined by
			\[
				\int_{x\in X}\varphi(x)\,\langle\alpha(\mathrm{d}x),-\rangle
				\coloneqq
				\sum_i v_i\,\langle\alpha(E_i),-\rangle,
			\]
			and the general case is obtained by approximation as for the Lebesgue integral.

			One can specialize the Cycon--Hellwig integral to the quantum case by taking $V = \Calg_*$.
			Then while the Cycon--Hellwig integral integrates a family of states against a POVM,
			our quantum operations integral integrates a family of quantum operations against a state-valued measure.
			This seems necessary because we have not been able to find a definition of the monad multiplication~\eqref{eq:ozawa_multiplication} in terms of the Cycon--Hellwig integral.
			\item
				One may ask whether the formula for countably valued integrands in \cref{lem:ozawa_integral_countable_approx}\ref{item:countably_valued_integral} could instead be taken as the definition of the integral, by approximating a general $L$ by countably valued $L_n$.
				This does not seem to work without the Radon--Nikodym density, because the resulting limit need not be independent of the approximating sequence.
				For example, take again $\Calg=L^\infty([0,1])$, $\Calg_*=L^1([0,1])$, and $\mfrak(E)=\chi_E$ as above.
				For every $n$, partition $[0,1]$ into measurable sets $I_{n,1},\ldots,I_{n,n}$ of measure $1/n$.
				Define $L_n:[0,1]\to\Op(\Calg)$ by declaring that for every $k$ and $x\in I_{n,k}$,
				\[
					L_n(x)(\rho)
					\coloneqq
					n \left(\int_{t\in I_{n,k}}\rho(t)\,\mathrm{d}t\right) \chi_{I_{n,k}}.
				\]
				This is a positive contraction, hence a quantum operation.
				Then $L_n(x)(\rho)\to 0$ in $\Calg_*$ for every fixed $x$ and $\rho \in \Calg_*$, so $L_n$ converges pointwise strongly to the zero integrand.
				But the countably valued integral formula would give
				\[
					\int_{x\in[0,1]} L_n(x)\left(\mfrak(\mathrm{d}x)\right)
					=
					\sum_{k=1}^n L_n(x_{n,k})(\chi_{I_{n,k}})
					=
					\sum_k\chi_{I_{n,k}}
					=
					\chi_{[0,1]},
				\]
				where $x_{n,k}\in I_{n,k}$.
				Thus approximation by countably valued integrands would assign a nonzero limiting value to the zero integrand.

				The Radon--Nikodym density in our definition rules out precisely this kind of pathology.
		\end{enumerate}
	\end{rem}

We now investigate some basic properties of our quantum operation integral, still assuming that $\Calg_*$ is separable and has the Radon--Nikodym property.

\begin{lem}
	\label{lem:ozawa_integral_countable_approx}
	Let $L : X \to \Op(\Calg)$ be measurable.
	\begin{enumerate}
		\item\label{item:countably_valued_integral}
			If $L$ is countably valued, say $L(x)=\Phi_i$ for $x\in S_i$, where $(S_i)_{i\in\N}$ is a countable measurable partition of $X$, then
			\[
				\int_{x \in X} L(x) \left(\mfrak(\mathrm{d}x)\right)
				=
				\sum_{i\in\N}\Phi_i(\mfrak(S_i)),
			\]
			with norm convergence of the sum.
		\item\label{item:countably_valued_approximation}
			There is a sequence of measurable maps $L_n:X\to\Op(\Calg)$ with countable image and such that
			\[
				L_n(x)(\rho)\longrightarrow L(x)(\rho)
			\]
			in norm for every $x\in X$ and $\rho\in\Calg_*$.
			If $L$ takes values in channels, then the $L_n$ can be chosen to take values in channels as well.
	\end{enumerate}
\end{lem}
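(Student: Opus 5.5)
For part \ref{item:countably_valued_integral}, I will unfold the definition of the quantum operation integral. Since $L$ is constant equal to $\Phi_i$ on $S_i$, the Bochner integrand $x\mapsto L(x)(D_\mfrak(x))$ is the norm-convergent sum over $i$ of $\chi_{S_i}(x)\,\Phi_i(D_\mfrak(x))$. This sum is dominated by $\|D_\mfrak(x)\|$, which is $|\mfrak|$-integrable. Dominated convergence for the Bochner integral therefore lets me interchange sum and integral. Each summand gives $\int_{S_i}\Phi_i(D_\mfrak(x))\,|\mfrak|(\mathrm{d}x)$. Because $\Phi_i$ is a bounded linear map, it commutes with the Bochner integral, so this equals $\Phi_i\bigl(\int_{S_i} D_\mfrak\,\mathrm{d}|\mfrak|\bigr)=\Phi_i(\mfrak(S_i))$ by \eqref{vector_measure_radon_nikodym}. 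Norm convergence of $\sum_i\Phi_i(\mfrak(S_i))$ then follows from the same domination, or directly from $\sum_i\|\mfrak(S_i)\|\le|\mfrak|(X)<\infty$.

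For part \ref{item:countably_valued_approximation}, I will use \cref{rem:operation_space_standard_borel}: $\Op(\Calg)$ is measurably isomorphic, via $\Phi\mapsto(\Phi(\rho_n))_n$, to a closed subset $K$ of the Polish space $(\Calg_*)^\N$. I equip $K$ with the complete metric $d(\Phi,\Psi)=\sum_n 2^{-n}\min(1,\|\Phi(\rho_n)-\Psi(\rho_n)\|)$, so that $\Op(\Calg)$ becomes a separable metric space. I fix a countable dense sequence $(\Psi_k)_k$ in it and define $L_n(x)$ to be the $\Psi_k$ with the smallest index $k$ satisfying $d(L(x),\Psi_k)<1/n$.
\begin{itemize}
\item \emph{Measurability.} The set $\{x : d(L(x),\Psi_k)<1/n\}$ is measurable, since $d(-,\Psi_k)$ is Borel on $\Op(\Calg)$ and $L$ is measurable. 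Hence $L_n$ is measurable and countably valued.
\item \emph{Convergence on the dense set.} Because $d(L_n(x),L(x))<1/n$, we get $L_n(x)(\rho_m)\to L(x)(\rho_m)$ in norm for every $m$.
\item \emph{Convergence for general $\rho$.} For arbitrary $\rho$ I use an $\eps/3$ argument. All maps involved are contractions, so choosing $\rho_m$ close to $\rho$ gives $L_n(x)(\rho)\to L(x)(\rho)$ in norm.
\end{itemize}

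For the channel refinement, I will observe that the set $C$ of channels is closed in $K$. Indeed, trace preservation is a closed condition on the $\rho_n$, because $\tau$ is continuous, and it extends to all of $\Calg_*$ by density. Thus $C$ is itself a separable metric space. If $L$ takes values in $C$, I simply run the same construction with a countable dense sequence $(\Psi_k)$ chosen inside $C$.

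The main obstacle I expect is in the approximation step: making sure that $\Op(\Calg)$, with its evaluation $\sigma$-algebra, genuinely is the Borel $\sigma$-algebra of the metric $d$, so that the sets defining $L_n$ are measurable. This is exactly what \cref{rem:operation_space_standard_borel} provides, namely that the evaluation $\sigma$-algebra is generated by countably many evaluations and the embedding has closed image. I will invoke that remark rather than reprove it. The only other point to check is that a nearest-index selection, as opposed to a nearest-point selection, suffices. It does, since all we need is $d(L_n(x),L(x))<1/n$.
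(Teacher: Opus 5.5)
Your proposal is correct and follows the paper's proof essentially step for step. In part (i) you split the Bochner integral over the partition, pull each bounded $\Phi_i$ through it, and get norm convergence from $\sum_i\|\mfrak(S_i)\|\le|\mfrak|(X)$; in part (ii) you use the same metric $d$, the same first-index selection from a countable dense sequence, and a dense sequence of channels for the refinement. Your dominated-convergence interchange and the explicit $\eps/3$ step only spell out what the paper leaves implicit, and the measurability you flag as the main obstacle is in fact immediate, since each $\Phi\mapsto d(\Phi,\Psi_k)$ is built from countably many evaluation maps.
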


\begin{proof}
	For the first claim, we apply the definition of the integral and decompose the Bochner integral over the partition $(S_i)_i$:
	\begin{align*}
		\int_{x \in X} L(x) \left(\mfrak(\mathrm{d}x)\right)
		&=
		\sum_i \int_{x \in S_i} \Phi_i\bigl(D_\mfrak(x)\bigr)\,|\mfrak|(\mathrm{d}x) \\
		&=
		\sum_i \Phi_i\left(\int_{x \in S_i} D_\mfrak(x)\,|\mfrak|(\mathrm{d}x)\right) \\
		&=
		\sum_i \Phi_i(\mfrak(S_i)).
	\end{align*}
	The sum converges absolutely in norm because
	\[
		\sum_i \|\Phi_i(\mfrak(S_i))\|
		\leq
		\sum_i \|\mfrak(S_i)\|
		\leq
		|\mfrak|(X).
	\]

	For the second claim, choose a countable dense subset $(\rho_k)_{k\in\N}$ of $\Calg_*$ and consider the metric
	\[
		d(\Phi,\Psi)
		\coloneqq
		\sum_{k=1}^{\infty}2^{-k}\min\{1,\|\Phi(\rho_k)-\Psi(\rho_k)\|\}
	\]
	on $\Op(\Calg)$.
	This metric generates the measurable structure on $\Op(\Calg)$, and convergence in $d$ implies strong operator convergence because all quantum operations are contractions.
	Moreover, $\Op(\Calg)$ is separable for this metric, since it embeds into the countable product $(\Calg_*)^\N$ of separable metric spaces by \Cref{rem:operation_space_standard_borel}.
	Choose a countable dense subset $(\Phi_i)_{i\in\N}$ and define $L_n$ by sending $x$ to the first $\Phi_i$ with $d(\Phi_i,L(x))<1/n$.
	This is the standard construction of countably valued approximants for measurable maps into separable metric spaces~\cite[Section~4]{AliprantisBorder}.
	It gives $d(L_n(x),L(x))\to 0$, hence the desired strong-operator pointwise convergence.

	If $L$ takes values in channels, use instead a countable dense subset of the channel subspace.
	This subspace is closed under pointwise norm limits: complete positivity, trace nonincrease, and the trace-preserving equality are all closed conditions after evaluating on positive elements of the relevant matrix levels.
	Hence the same construction gives channel-valued approximants.
\end{proof}

\begin{lem}
	\label{lem:ozawa_integral_convergence}
	Let $L_n,L:X\to\Op(\Calg)$ be measurable maps such that
	\begin{equation}
		\label{pointwise_convergence_integrand}
		L_n(x)(\rho) \longrightarrow L(x)(\rho)
	\end{equation}
	in norm for every $x\in X$ and $\rho\in\Calg_*$.
	Then
	\[
		\int_{x \in X} L_n(x) \left(\mfrak(\mathrm{d}x)\right)
		\longrightarrow
		\int_{x \in X} L(x) \left(\mfrak(\mathrm{d}x)\right)
	\]
	in norm.
\end{lem}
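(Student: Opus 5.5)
The natural route is dominated convergence for Bochner integrals, applied with the variation measure $|\mfrak|$. By the definition of the quantum operation integral, both sides are Bochner integrals over the finite positive measure $|\mfrak|$:
\[
	\int_{x\in X} L_n(x)\bigl(D_\mfrak(x)\bigr)\,|\mfrak|(\mathrm{d}x)
	\quad\text{and}\quad
	\int_{x\in X} L(x)\bigl(D_\mfrak(x)\bigr)\,|\mfrak|(\mathrm{d}x).
\]
So it suffices to show that the integrands converge pointwise in norm and are uniformly dominated by an integrable function. Measurability of the integrands $x\mapsto L_n(x)(D_\mfrak(x))$ is already available: it was noted after the definition, as a consequence of \Cref{lem:operation_evaluation_measurable} composed with the measurable map $x\mapsto (L_n(x),D_\mfrak(x))$. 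Separability of $\Calg_*$ then gives strong (Bochner) measurability.

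The plan has three steps. First, pointwise convergence: for each fixed $x\in X$, apply the hypothesis~\eqref{pointwise_convergence_integrand} with $\rho\coloneqq D_\mfrak(x)$. This gives $L_n(x)(D_\mfrak(x))\to L(x)(D_\mfrak(x))$ in norm. The point is that the hypothesis holds for every $\rho$, so we can substitute the $x$-dependent vector $D_\mfrak(x)$ without needing any uniformity in $\rho$. Second, domination: since every quantum operation is a contraction on $\Calg_*$,
\[
	\bigl\|L_n(x)(D_\mfrak(x)) - L(x)(D_\mfrak(x))\bigr\| \leq 2\,\|D_\mfrak(x)\|,
\]
and $x\mapsto\|D_\mfrak(x)\|$ is $|\mfrak|$-integrable because $D_\mfrak$ is Bochner integrable. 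In fact $\|D_\mfrak\|=1$ holds $|\mfrak|$-almost everywhere, but integrability is all we need. Third, conclusion: the norm inequality for Bochner integrals gives
\[
	\Bigl\|\int L_n(x)\bigl(\mfrak(\mathrm{d}x)\bigr) - \int L(x)\bigl(\mfrak(\mathrm{d}x)\bigr)\Bigr\|
	\leq
	\int_{x\in X} \bigl\|L_n(x)(D_\mfrak(x)) - L(x)(D_\mfrak(x))\bigr\|\,|\mfrak|(\mathrm{d}x).
\]
The right-hand side is a scalar integral of nonnegative functions that tend to $0$ pointwise and are dominated by $2\|D_\mfrak\|$. The scalar dominated convergence theorem therefore sends it to $0$.

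I expect no real obstacle. The only subtle point is the first step: pointwise strong-operator convergence suffices only because the integral is defined through the Radon--Nikodym density, so the operations act on honest vectors $D_\mfrak(x)$. \cref{rem:quantum_operation_integral_discussion} shows that this fails for approximation schemes that do not use the density. One should also remember that $D_\mfrak$ is only determined $|\mfrak|$-almost everywhere. Any choice of $D_\mfrak$ works, since changing it on a null set does not affect either integral.
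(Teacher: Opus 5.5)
Your proposal is correct and matches the paper's proof. You get pointwise norm convergence of $x\mapsto L_n(x)(D_\mfrak(x))$ by applying the hypothesis with $\rho = D_\mfrak(x)$, dominate the difference by $2\|D_\mfrak(x)\|$, and conclude by dominated convergence with respect to $|\mfrak|$; the only difference is that you unpack the Bochner dominated convergence theorem into the norm inequality for Bochner integrals plus scalar dominated convergence, which is how that theorem is proved anyway.
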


\begin{proof}
	By definition of the integral, the integrands are
	\[
		x\longmapsto L_n(x)\bigl(D_\mfrak(x)\bigr)
		\qquad\text{and}\qquad
		x\longmapsto L(x)\bigl(D_\mfrak(x)\bigr).
	\]
	They converge pointwise in norm by the assumption.
	Moreover,
	\[
		\left\|L_n(x)\bigl(D_\mfrak(x)\bigr)-L(x)\bigl(D_\mfrak(x)\bigr)\right\|
		\leq
		2\|D_\mfrak(x)\|,
	\]
	and the right-hand side is integrable with respect to $|\mfrak|$.
	The claim follows from the Bochner dominated convergence theorem.
\end{proof}

\begin{lem}
	\label{lem:ozawa_positive_integral}
	Let $\mfrak:\Sigma_X\to\Calg_*$ be a positive measure of finite variation and let $L:X\to\Op(\Calg)$ be measurable.
	\begin{enumerate}
		\item\label{item:ozawa_positive_integral_positive}
			The integral
			\[
				\int_{x \in X} L(x) \left(\mfrak(\mathrm{d}x)\right)
			\]
			is positive in $\Calg_*$.
		\item\label{item:ozawa_positive_integral_trace}
			If every $L(x)$ is trace-preserving, then
			\[
				\tau\left(\int_{x \in X} L(x) \left(\mfrak(\mathrm{d}x)\right)\right)
				=
				\tau(\mfrak(X)).
			\]
	\end{enumerate}
\end{lem}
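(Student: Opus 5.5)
The idea is to reduce both claims to statements about the Radon--Nikodym density $D_\mfrak$: first show that it is positive almost everywhere, and then apply positivity and trace-preservation of each $L(x)$ pointwise under the Bochner integral.

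\emph{Step 1: the density is almost everywhere positive.} Write $\mfrak(R)=\int_R D_\mfrak\,\mathrm{d}|\mfrak|$ as in~\eqref{vector_measure_radon_nikodym}. For every positive $a\in\Calg_+$, the scalar function $x\mapsto\langle a,D_\mfrak(x)\rangle$ is real and $|\mfrak|$-integrable. Its integral over every $R\in\Sigma_X$ equals $\langle a,\mfrak(R)\rangle\geq 0$, because $\mfrak$ is positive. Hence $\langle a,D_\mfrak(x)\rangle\geq 0$ for $|\mfrak|$-almost every $x$. The exceptional null set depends on $a$, which is the only delicate point. To handle it, I would take a countable subset $(a_k)$ of $\Calg_+$ such that positivity of a normal functional can be tested on the $a_k$ alone. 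Separability of $\Calg_*$ makes $\Calg$ $\sigma$-weakly separable, and $\Calg_+$ is separable for the $\sigma$-weak topology, so a $\sigma$-weakly dense sequence in the unit ball of $\Calg_+$ works: the set of $\omega$ with $\langle a,\omega\rangle\ge 0$ for all $a$ in this sequence is then all of $\Calg_{*,+}$, by $\sigma$-weak continuity of $\omega$. Taking the countable union of the exceptional null sets, we obtain $D_\mfrak(x)\in\Calg_{*,+}$ for $|\mfrak|$-almost every $x$. We may then redefine $D_\mfrak$ to be $0$ on this null set without changing the integral.

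\emph{Step 2: proof of (i).} For almost every $x$, positivity of the quantum operation $L(x)$ gives $L(x)(D_\mfrak(x))\in\Calg_{*,+}$. The Bochner integral of an integrable function with values in the norm-closed convex cone $\Calg_{*,+}$ again lies in that cone. One way to see this is to pair with each $a\in\Calg_+$: the Bochner integral commutes with the bounded functional $\langle a,-\rangle$, and the integral of a nonnegative scalar function is nonnegative. This proves (i).

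\emph{Step 3: proof of (ii).} The trace $\tau$ is a bounded linear functional, so it commutes with the Bochner integral. Using $\tau(L(x)(\sigma))=\tau(\sigma)$ pointwise for every $x$, we get
\[
	\tau\left(\int_X L(x)\bigl(D_\mfrak(x)\bigr)\,|\mfrak|(\mathrm{d}x)\right)
	=\int_X \tau\bigl(D_\mfrak(x)\bigr)\,|\mfrak|(\mathrm{d}x)
	=\tau\left(\int_X D_\mfrak\,\mathrm{d}|\mfrak|\right)=\tau(\mfrak(X)).
\]
Here trace-preservation is applied to the almost everywhere positive values $D_\mfrak(x)$, though by linearity it holds for all of $\Calg_*$ anyway. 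Step 2 and Step 3 are routine, so Step 1, specifically the countable positivity-testing family, is the main obstacle.
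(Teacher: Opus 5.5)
Your proof is correct, but it takes a different route from the paper's. The paper never examines the density $D_\mfrak$ directly. It first treats countably valued $L$, where the integral equals $\sum_i \Phi_i(\mfrak(S_i))$ by \cref{lem:ozawa_integral_countable_approx}. There positivity and the trace identity follow at once from positivity of the values $\mfrak(S_i)$ and trace-preservation of the $\Phi_i$. It then passes to general $L$ using countably valued approximants and \cref{lem:ozawa_integral_convergence}, together with norm closedness of the positive cone and continuity of $\tau$.

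That argument is short because it reuses machinery the paper needs anyway (for measurability of $\mu$, change of variables and Fubini). For part (ii), however, it relies on the extra clause that the approximants can be chosen channel-valued.

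Your argument works straight from the definition. Part (ii) becomes a one-line computation needing neither approximation nor positivity, since trace-preservation extends to all of $\Calg_*$ by linearity. Part (i) costs you a separate measure-theoretic fact: the Radon--Nikodym density of a positive vector measure is almost everywhere positive. You obtain this correctly from a countable $\sigma$-weakly dense family in the positive part of the unit ball of $\Calg$. That family exists because the unit ball is weak$^*$ compact and metrizable when $\Calg_*$ is separable.

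One small imprecision: $\langle a, D_\mfrak(x)\rangle$ need not be real a priori, since $D_\mfrak(x)$ need not be hermitian. However, its integral over every $R \in \Sigma_X$ is a nonnegative real number. So its imaginary part vanishes almost everywhere and its real part is almost everywhere nonnegative, and your conclusion stands.
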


\begin{proof}
	For countably valued $L$, both claims follow from \cref{lem:ozawa_integral_countable_approx}\ref{item:countably_valued_integral}.
	Indeed, if $L(x)=\Phi_i$ on the measurable partition $S_i$, then
	\[
		\int_{x \in X} L(x) \left(\mfrak(\mathrm{d}x)\right)
		=
		\sum_i \Phi_i(\mfrak(S_i)).
	\]
	Every term is positive, proving positivity.
	If all $\Phi_i$ are trace-preserving, then
	\[
		\tau\left(\sum_i\Phi_i(\mfrak(S_i))\right)
		=
		\sum_i\tau(\mfrak(S_i))
		=
		\tau(\mfrak(X)).
	\]

	For general $L$, use \cref{lem:ozawa_integral_countable_approx}\ref{item:countably_valued_approximation} and then pass to the limit with \Cref{lem:ozawa_integral_convergence}.
	The positive cone in $\Calg_*$ is norm closed, and the trace functional $\tau$ is norm continuous.
\end{proof}

\begin{lem}
	\label{lem:ozawa_integral_monotone}
	Let $\mfrak:\Sigma_X\to\Calg_*$ be a positive measure of finite variation and $K,L:X\to\Op(\Calg)$ measurable maps such that $K(x)\leq L(x)$ in the cp order for every $x\in X$.
	Then
	\[
		\int_{x \in X} K(x) \left(\mfrak(\mathrm{d}x)\right)
		\leq
		\int_{x \in X} L(x) \left(\mfrak(\mathrm{d}x)\right).
	\]
\end{lem}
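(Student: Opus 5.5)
The plan is to reduce monotonicity to positivity, which is \cref{lem:ozawa_positive_integral}\ref{item:ozawa_positive_integral_positive}, by using linearity of the integral in the integrand. The order relation $K(x)\leq L(x)$ in the cp order means that $L(x)-K(x)$ is completely positive, and in particular positive. The difference $L(x)-K(x)$ need not be trace-nonincreasing, though, so it is not itself a quantum operation. The positivity lemma therefore cannot be applied to it verbatim, and the first step is to deal with this.

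The cleanest route avoids forming $L-K$ as an $\Op(\Calg)$-valued map altogether and works directly with Bochner integrals. By definition and by linearity of the Bochner integral,
\[
	\int_{x \in X} L(x) \left(\mfrak(\mathrm{d}x)\right) - \int_{x \in X} K(x) \left(\mfrak(\mathrm{d}x)\right)
	=
	\int_{x\in X} \bigl(L(x)-K(x)\bigr)\bigl(D_\mfrak(x)\bigr)\,|\mfrak|(\mathrm{d}x).
\]
The integrand is Bochner measurable by \Cref{lem:operation_evaluation_measurable}, being a difference of two such maps, and it is dominated by $2\|D_\mfrak(x)\|$. So the task is to show that this Bochner integral lies in the norm-closed positive cone $\Calg_{*,+}$.

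For this I need $D_\mfrak(x)\geq 0$ for $|\mfrak|$-almost every $x$, where $\mfrak$ is positive. I would test against positive $a\in\Calg$. For every $R\in\Sigma_X$,
\[
	\int_R \langle a, D_\mfrak(x)\rangle\,|\mfrak|(\mathrm{d}x)=\langle a,\mfrak(R)\rangle\geq 0,
\]
so $\langle a,D_\mfrak(x)\rangle\geq 0$ almost everywhere. Taking a countable family of positive $a$ that determines positivity of normal functionals, which exists because $\Calg_*$ is separable (for instance a countable weak$^*$-dense subset of the positive unit ball), a countable union of null sets shows $D_\mfrak(x)\in\Calg_{*,+}$ almost everywhere. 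I would redefine $D_\mfrak$ to be $0$ on the exceptional null set. Then $(L(x)-K(x))(D_\mfrak(x))\geq 0$ for every $x$, because a cp difference is positive.

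It then remains to see that a Bochner integral of a function with values in a closed convex cone lies in that cone. I would argue this by Hahn--Banach: for every positive $a\in\Calg$, the functional $\langle a,-\rangle$ commutes with the Bochner integral and gives a nonnegative scalar integral, and positivity of a normal functional is exactly nonnegativity on positive $a$.

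The main obstacle is the almost-everywhere positivity of the density $D_\mfrak$. This was implicitly needed already in \cref{lem:ozawa_positive_integral}, although there it was bypassed through countably valued approximation. An alternative that stays closer to the paper's style is to rescale, since $\tfrac12(L(x)-K(x))$ is cp but possibly not trace-nonincreasing. That is awkward, so I would rather rely on the density argument. If I wanted to reuse the paper's machinery instead, I would approximate both $K$ and $L$ by countably valued maps via \cref{lem:ozawa_integral_countable_approx}\ref{item:countably_valued_approximation}. The difficulty there is that the approximants need not preserve the order $K_n\leq L_n$, and that is exactly why I prefer the direct Bochner argument above.
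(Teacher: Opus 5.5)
Your proof is correct, but the premise that sends you down the longer path is false: $L(x)-K(x)$ \emph{is} trace-nonincreasing. For $\rho\in\Calg_{*,+}$ the element $K(x)(\rho)$ is positive, so
\[
	\tau\bigl((L(x)-K(x))(\rho)\bigr)
	=
	\tau(L(x)(\rho))-\tau(K(x)(\rho))
	\leq
	\tau(L(x)(\rho))
	\leq
	\tau(\rho).
\]
By definition of the cp order, $L(x)-K(x)$ is completely positive. Also, $x\mapsto L(x)(\rho)-K(x)(\rho)$ is measurable for each $\rho$. Hence $L-K$ is a measurable map $X\to\Op(\Calg)$. The paper simply applies \cref{lem:ozawa_positive_integral}\ref{item:ozawa_positive_integral_positive} to $L-K$. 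It then uses additivity of the integral in the integrand for $L=K+(L-K)$, where both summands are $\Op(\Calg)$-valued. So no rescaling is needed. Your concern about approximants failing to preserve $K_n\leq L_n$ also does not arise, because only $L-K$ is approximated, inside the proof of the positivity lemma.

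Your alternative is nevertheless sound. The countable family you need exists: when $\Calg_*$ is separable, the positive part of the unit ball of $\Calg$ is weak$^*$-compact and metrizable, hence separable. Testing the Radon--Nikodym identity against such a family gives $D_\mfrak\geq 0$ almost everywhere with respect to $|\mfrak|$. Pointwise positivity of the integrand and commutation of $\langle a,-\rangle$ with the Bochner integral then finish the argument.

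Compared with the paper, your route avoids countably valued approximation entirely. It also proves a reusable fact: the density of a positive vector measure is a.e.\ positive. This gives positivity of the integral for any bounded measurable family of positive maps, with or without a trace condition, and would even give a direct proof of the positivity lemma itself. The paper's route is much shorter because it reuses machinery already in place.
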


\begin{proof}
	The pointwise difference $L-K$ is a measurable map $X\to\Op(\Calg)$.
	By \cref{lem:ozawa_positive_integral}\ref{item:ozawa_positive_integral_positive},
	\[
		\int_{x \in X} (L(x)-K(x)) \left(\mfrak(\mathrm{d}x)\right)
	\]
	is positive.
	Linearity of the integral in the integrand gives the claim.
\end{proof}

We next prove a change of variables formula.

\begin{lem}
	\label{lem:ozawa_change_variables}
	Let $f:X\to Y$ be measurable, and let $L:Y\to\Op(\Calg)$ be measurable.
	Then
	\[
		\int_{y \in Y} L(y) \left((f_*\mfrak)(\mathrm{d}y)\right)
		=
		\int_{x \in X} L(f(x)) \left(\mfrak(\mathrm{d}x)\right).
	\]
\end{lem}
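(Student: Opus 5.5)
The natural strategy is to reduce to countably valued integrands, where both sides can be computed explicitly, and then pass to the limit. First I would check that both sides make sense. The pushforward $f_*\mfrak$, defined by $(f_*\mfrak)(T) \coloneqq \mfrak(f^{-1}(T))$, is a countably additive $\Calg_*$-valued vector measure. It has finite variation, since $|f_*\mfrak|(T) \leq |\mfrak|(f^{-1}(T))$. The composite $L\circ f : X \to \Op(\Calg)$ is measurable, so both quantum operation integrals are defined.

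\emph{Countably valued case.} Suppose $L(y)=\Phi_i$ for $y\in T_i$, where $(T_i)_{i\in\N}$ is a countable measurable partition of $Y$. Then $L\circ f$ is countably valued with respect to the measurable partition $(f^{-1}(T_i))_{i\in\N}$ of $X$. Applying \cref{lem:ozawa_integral_countable_approx}\ref{item:countably_valued_integral} to each side gives
\[
	\int_{y\in Y} L(y)\left((f_*\mfrak)(\mathrm{d}y)\right)
	= \sum_i \Phi_i\bigl((f_*\mfrak)(T_i)\bigr)
	= \sum_i \Phi_i\bigl(\mfrak(f^{-1}(T_i))\bigr)
	= \int_{x\in X} L(f(x))\left(\mfrak(\mathrm{d}x)\right).
\]
Both series converge in norm, so the two sides agree.

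\emph{General case.} Use \cref{lem:ozawa_integral_countable_approx}\ref{item:countably_valued_approximation} to choose countably valued measurable maps $L_n : Y\to\Op(\Calg)$ with $L_n(y)(\rho)\to L(y)(\rho)$ in norm for every $y$ and $\rho$. Then $L_n\circ f$ is countably valued and converges to $L\circ f$ pointwise in the same strong sense. Now apply \Cref{lem:ozawa_integral_convergence} twice: once on $Y$ with the vector measure $f_*\mfrak$, and once on $X$ with $\mfrak$. Both sides of the identity for $L_n$ then converge in norm to the corresponding sides for $L$. Since the identity holds for each $L_n$ by the previous step, it holds for $L$.

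\emph{Main obstacle.} The only delicate point is that the integral is defined through the Radon--Nikodym densities $D_{f_*\mfrak}$ and $D_\mfrak$, which are not obviously related, so a direct computation is awkward. The approximation argument avoids ever comparing them. It relies only on \Cref{lem:ozawa_integral_convergence} being available for the vector measure $f_*\mfrak$. That lemma needs $f_*\mfrak$ to have finite variation and to admit a density, and the density exists by the standing Radon--Nikodym assumption on $\Calg_*$. So the only hypothesis that genuinely has to be verified is finite variation, which follows from the variation estimate in the first paragraph.
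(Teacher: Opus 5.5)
Your proposal is correct and follows essentially the same route as the paper: first the countably valued case via \cref{lem:ozawa_integral_countable_approx}\ref{item:countably_valued_integral} on the pulled-back partition $(f^{-1}(T_i))_i$, then the general case via countably valued approximants and \Cref{lem:ozawa_integral_convergence} on both sides. The paper leaves implicit that $f_*\mfrak$ has finite variation; your estimate $|f_*\mfrak|(T)\leq|\mfrak|(f^{-1}(T))$ fills in that detail.
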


\begin{proof}
	For countably valued $L$, say $L(y)=\Phi_i$ on the measurable partition $T_i$ of $Y$, the claim follows directly from \cref{lem:ozawa_integral_countable_approx}\ref{item:countably_valued_integral}:
	\[
		\int_{y \in Y} L(y) \left((f_*\mfrak)(\mathrm{d}y)\right)
		=
		\sum_i\Phi_i(\mfrak(f^{-1}(T_i)))
		=
		\int_{x \in X} L(f(x)) \left(\mfrak(\mathrm{d}x)\right).
	\]
	For general $L$, choose countably valued approximants $L_n$ as in \cref{lem:ozawa_integral_countable_approx}\ref{item:countably_valued_approximation}.
	Then $L_n\circ f$ approximates $L\circ f$ pointwise in the strong-operator topology, and the result follows by applying \Cref{lem:ozawa_integral_convergence} on both sides.
\end{proof}

Our final lemma is a Fubini-type result for quantum operation integrals, which is relevant for the proof of associativity of the multiplication of the quantum instrument monad in \Cref{sec:ozawa_monad}.

\begin{lem}
	\label{lem:quantum_operation_integral_fubini}
	Let $X$ and $Y$ be measurable spaces, and let
	\[
		\kappa_y : \Sigma_X \times \Calg_* \longrightarrow \Calg_*,
		\qquad y\in Y,
	\]
	be a measurable family of quantum-operation-valued measures in the following sense:
	\begin{enumerate}
		\item for every $y\in Y$ and $S\in\Sigma_X$, the map $\kappa_y(S,-)$ is a quantum operation;
		\item for every $y\in Y$ and $\rho\in\Calg_*$, the map $S\mapsto\kappa_y(S,\rho)$ is a finite-variation vector measure;
		\item for every $S\in\Sigma_X$, the map $y\mapsto\kappa_y(S,-)$ is measurable as a map $Y\to\Op(\Calg)$.
	\end{enumerate}
	Let $\mfrak:\Sigma_Y\to\Calg_*$ be a finite-variation vector measure, and define
	\[
		(\kappa_*\mfrak)(S)
		\coloneqq
		\int_{y \in Y} \kappa_y(S,-) \left(\mfrak(\mathrm{d}y)\right)
	\]
	for $S\in\Sigma_X$.
	Then $\kappa_*\mfrak$ is a finite-variation vector measure on $X$.
	Moreover, for every measurable $L:X\to\Op(\Calg)$, the formula
	\[
		\overline L(y)(\rho)
		\coloneqq
		\int_{x \in X} L(x) \left(\kappa_y(\mathrm{d}x,\rho)\right)
	\]
	defines a measurable map $\overline L:Y\to\Op(\Calg)$, and
	\[
		\int_{y \in Y} \overline L(y) \left(\mfrak(\mathrm{d}y)\right)
		=
		\int_{x \in X} L(x) \left((\kappa_*\mfrak)(\mathrm{d}x)\right).
	\]
\end{lem}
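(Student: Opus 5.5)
Our plan is to reduce everything to countably valued integrands via \cref{lem:ozawa_integral_countable_approx}, verify the identities there by direct summation, and then pass to limits with \Cref{lem:ozawa_integral_convergence}.

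\textbf{Step 1: $\kappa_*\mfrak$ is a finite-variation vector measure.} Finite additivity in $S$ follows from linearity of the quantum operation integral in the integrand. For countable additivity, fix disjoint $S_n$ with union $S$ and partial unions $S^{(N)}$. For each $y$ and $\sigma$ we have $\kappa_y(S^{(N)},\sigma)\to\kappa_y(S,\sigma)$ in norm. Hence \Cref{lem:ozawa_integral_convergence} applies to the measurable integrands $y\mapsto\kappa_y(S^{(N)},-)$, which are measurable by hypothesis (iii).

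For finite variation, take any finite partition $(S_j)$ of $X$. Using the Radon--Nikodym density,
\[
	\sum_j\|(\kappa_*\mfrak)(S_j)\|
	\le
	\int_Y\sum_j\|\kappa_y(S_j,D_\mfrak(y))\|\,|\mfrak|(\mathrm{d}y).
\]
To make this $\le|\mfrak|(Y)$, the natural bound is that the total variation of $\kappa_y(-,\sigma)$ is at most $\|\sigma\|$. Hypothesis (ii) only gives finiteness, so here I would first try to exploit positivity. Decompose $\sigma$ into at most four positive parts. For positive $\sigma$, the variation equals $\tau(\kappa_y(X,\sigma))\le\tau(\sigma)=\|\sigma\|$, since each $\kappa_y(S,-)$ is trace-nonincreasing. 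This gives $|\kappa_y(-,\sigma)|(X)\le 4\|\sigma\|$, and hence $|\kappa_*\mfrak|(X)\le 4|\mfrak|(Y)$.

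\textbf{Step 2: $\overline L$ is well defined and measurable.} For fixed $y$, linearity of $\rho\mapsto\overline L(y)(\rho)$ follows from linearity of the integral in the measure variable. Complete positivity follows from \cref{lem:ozawa_positive_integral} at matrix levels, exactly as in \Cref{prop:ozawa_unit_multiplication}. Trace-nonincrease follows from monotonicity (\cref{lem:ozawa_integral_monotone}) together with $\tau(\kappa_y(X,\rho))\le\tau(\rho)$. Thus $\overline L(y)$ is a quantum operation.

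For measurability in $y$, first take $L$ countably valued, equal to $\Phi_i$ on a partition $(S_i)$ of $X$. Then
\[
	\overline L(y)(\rho)=\sum_i\Phi_i(\kappa_y(S_i,\rho)).
\]
This is a norm limit of finite sums of maps that are measurable by hypothesis (iii) together with \Cref{lem:operation_evaluation_measurable}. For general $L$, choose approximants $L_n$ and apply \Cref{lem:ozawa_integral_convergence} for each fixed $y$ and $\rho$. This gives $\overline{L_n}(y)(\rho)\to\overline L(y)(\rho)$, and pointwise norm limits of measurable maps into $\Calg_*$ are measurable.

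\textbf{Step 3: the Fubini identity.} For countably valued $L$, \cref{lem:ozawa_integral_countable_approx}\ref{item:countably_valued_integral} gives
\[
	\int_X L\,\mathrm{d}(\kappa_*\mfrak)
	=
	\sum_i\Phi_i\Bigl(\int_Y\kappa_y(S_i,-)\bigl(\mfrak(\mathrm{d}y)\bigr)\Bigr).
\]
Since $\Phi_i$ is bounded and linear, it commutes with the Bochner integral, so this equals
\[
	\int_Y\sum_i\Phi_i\bigl(\kappa_y(S_i,D_\mfrak(y))\bigr)\,|\mfrak|(\mathrm{d}y)
	=
	\int_Y\overline L\,\mathrm{d}\mfrak.
\]
The interchange of sum and integral is justified by dominated convergence: the partial sums are bounded by $4\|D_\mfrak(y)\|$ by Step 1.

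For general $L$, take approximants $L_n$. The left-hand sides converge by \Cref{lem:ozawa_integral_convergence} applied to the measure $\kappa_*\mfrak$. On the right, $\overline{L_n}\to\overline L$ pointwise strongly by Step 2, so \Cref{lem:ozawa_integral_convergence} applies on $Y$ as well.

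\textbf{Main obstacle.} I expect the main obstacle to be the uniform variation bound $|\kappa_y(-,\sigma)|(X)\le C\|\sigma\|$. It underlies both the finite variation of $\kappa_*\mfrak$ and the domination needed for the sum–integral interchange, and it requires the positivity argument above rather than hypothesis (ii) alone.
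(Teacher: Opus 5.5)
Your proposal is correct and follows essentially the paper's proof. It uses the same four-positive-parts bound on the variation of $\kappa_y(-,\sigma)$, the same reduction to countably valued integrands with measurability of $\overline L$ via pointwise norm limits, and the same final limit passage on both sides. The one structural difference is that you handle countably valued $L$ in one go by dominated convergence, using the bound $4\|D_\mfrak(y)\|$. The paper instead first treats finitely valued $L$ and then passes to partial sums via \Cref{lem:ozawa_integral_convergence}.

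One step needs repair. \Cref{lem:ozawa_integral_monotone} by itself does not give trace-nonincrease of $\overline L(y)$, because you do not say which integrand $L(x)$ is being compared with. The natural candidate fails, since $L(x)\le\id_{\Calg_*}$ does not hold in the cp order in general. There are two ways to fix this:
\begin{enumerate}
\item Compare with the measurable channel-valued map $x\mapsto L(x)+\bigl(\tau-\tau\circ L(x)\bigr)(-)\,\sigma_0$ for a fixed state $\sigma_0$. This dominates $L(x)$ in the cp order and is trace-preserving, so monotonicity together with \cref{lem:ozawa_positive_integral}\ref{item:ozawa_positive_integral_trace} gives the bound.
\item Argue as the paper does: for countably valued $L$ use $\tau\bigl(\sum_i\Phi_i(\kappa_y(S_i,\rho))\bigr)\le\tau(\kappa_y(X,\rho))\le\tau(\rho)$, and then use that $\Op(\Calg)$ is closed under pointwise norm limits.
\end{enumerate}
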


\begin{proof}
	We first show that $\kappa_*\mfrak$ is a finite-variation vector measure.
	Finite additivity follows from finite additivity of $S\mapsto\kappa_y(S,\rho)$ and linearity of the quantum operation integral in the integrand.
	If the $(S_n)_{n\in\N}$ are pairwise disjoint, put $S=\bigcup_n S_n$ and $S^{(N)}=\bigcup_{n=1}^N S_n$.
	Then $\kappa_y(S^{(N)},\rho)\to\kappa_y(S,\rho)$ in norm for all $y$ and $\rho$, so \Cref{lem:ozawa_integral_convergence} gives
	\[
		(\kappa_*\mfrak)(S^{(N)})
		\longrightarrow
		(\kappa_*\mfrak)(S).
	\]
	This proves countable additivity.

	For finite variation, write $D_\mfrak$ for the Radon--Nikodym density of $\mfrak$ with respect to $|\mfrak|$.
	If $\rho\in\Calg_{*,+}$ and $(S_i)_i$ is a finite measurable partition of $X$, then
	\[
		\sum_i \|\kappa_y(S_i,\rho)\|
		=
		\sum_i \tau(\kappa_y(S_i,\rho))
		=
		\tau(\kappa_y(X,\rho))
		\leq
		\tau(\rho)
		=
		\|\rho\|.
	\]
	Writing an arbitrary normal functional as a linear combination of four positive normal functionals with total norm at most $4\|\rho\|$ gives
	\[
		\sum_i \|\kappa_y(S_i,\rho)\|
		\leq
		4\|\rho\|
		\qquad
		\forall \rho\in\Calg_*.
	\]
	Hence
	\begin{align*}
		\sum_i \|(\kappa_*\mfrak)(S_i)\|
		&\leq
		\int_{y \in Y}
		\sum_i \|\kappa_y(S_i,D_\mfrak(y))\|
		\,|\mfrak|(\mathrm{d}y) \\
		&\leq
		4\int_{y \in Y}\|D_\mfrak(y)\|\,|\mfrak|(\mathrm{d}y)
		<
		\infty.
	\end{align*}
	Taking the supremum over finite partitions shows that $\kappa_*\mfrak$ has finite variation.

	We turn to the claims involving $\overline L$.
	First suppose that $L$ is countably valued with $L(x)=\Phi_i$ for $x\in S_i$, where $(S_i)_{i\in\N}$ is a measurable partition of $X$.
	Then
	\[
		\overline L(y)(\rho)
		=
		\sum_i \Phi_i(\kappa_y(S_i,\rho)).
	\]
	This is a quantum operation in $\rho$: linearity follows from linearity of the integral in the measure variable,
	complete positivity follows from complete positivity of the summands and norm closedness of the positive cone, while for positive $\rho$,
	\[
		\tau(\overline L(y)(\rho))
		\leq
		\sum_i \tau(\kappa_y(S_i,\rho))
		=
		\tau(\kappa_y(X,\rho))
		\leq
		\tau(\rho).
	\]
	Moreover $y\mapsto\overline L(y)$ is measurable, since for every $\rho$ the map
	\[
		y\longmapsto
		\sum_i \Phi_i(\kappa_y(S_i,\rho))
	\]
	is the pointwise norm limit of measurable finite partial sums.
	For finite-valued $L$, using \cref{lem:ozawa_integral_countable_approx}\ref{item:countably_valued_integral}, finite linearity of the integral in the integrand, and commutation of Bochner integration with bounded linear maps gives
	\begin{align*}
		\int_{y \in Y} \overline L(y) \left(\mfrak(\mathrm{d}y)\right)
		&=
		\int_{y \in Y}
		\left(
			\sum_i \Phi_i\circ\kappa_y(S_i,-)
		\right) \left(\mfrak(\mathrm{d}y)\right) \\
		&=
		\sum_i
		\int_{y \in Y} (\Phi_i\circ\kappa_y(S_i,-)) \left(\mfrak(\mathrm{d}y)\right) \\
		&=
		\sum_i \Phi_i\left(
			\int_{y \in Y} \kappa_y(S_i,-) \left(\mfrak(\mathrm{d}y)\right)
		\right) \\
		&=
		\sum_i \Phi_i((\kappa_*\mfrak)(S_i)) \\
		&=
		\int_{x \in X} L(x) \left((\kappa_*\mfrak)(\mathrm{d}x)\right).
	\end{align*}
	The same identity follows for countably valued $L$ by applying the finite-valued case to the finite partial sums and then passing to the limit by \Cref{lem:ozawa_integral_convergence}.

	For general $L$, choose countably valued $L_n:X\to\Op(\Calg)$ as in \cref{lem:ozawa_integral_countable_approx}\ref{item:countably_valued_approximation}, and let $\overline L_n$ denote the corresponding maps $Y\to\Op(\Calg)$.
	For every $y\in Y$ and $\rho\in\Calg_*$, \cref{lem:ozawa_integral_convergence} applied to the vector measure $R\mapsto\kappa_y(R,\rho)$ gives
	\[
		\overline L_n(y)(\rho)\longrightarrow \overline L(y)(\rho).
	\]
	Since $\Op(\Calg)$ is closed under pointwise norm limits, each $\overline L(y)$ is a quantum operation.
	It follows that $\overline L$ is measurable, because the measurable structure on $\Op(\Calg)$ is generated by evaluations on $\Calg_*$.
	The desired identity follows by applying \cref{lem:ozawa_integral_convergence} to both sides of the identity already proved for $L_n$, once with respect to $\kappa_*\mfrak$ and once with respect to $\mfrak$.
\end{proof}

\newpage
\printbibliography

\end{document}